\def\nca{\operatorname{nca}}
\def\succ{\operatorname{succ}}
\def\pred{\operatorname{pred}}
\newtheorem{lemma}{Lemma}
\newtheorem{theorem}{Theorem}
\begin{document}

\title{Approximate Distance Oracles for Planar Graphs with Improved Query Time-Space Tradeoff}
\author{Christian Wulff-Nilsen
        \footnote{Department of Computer Science,
                  University of Copenhagen,
                  \texttt{koolooz@di.ku.dk},
                  \texttt{http://www.diku.dk/$_{\widetilde{~}}$koolooz/}}}

\date{}

\maketitle
\begin{abstract}
We consider approximate distance oracles for edge-weighted $n$-vertex undirected planar graphs. Given fixed $\epsilon > 0$, we present a $(1+\epsilon)$-approximate distance oracle with $O(n(\log\log n)^2)$ space and $O((\log\log n)^3)$ query time. This improves the previous best product of query time and space of the oracles of Thorup (FOCS $2001$, J.~ACM $2004$) and Klein (SODA $2002$) from $O(n\log n)$ to $O(n(\log\log n)^5)$.
\end{abstract}

\section{Introduction}\label{sec:Intro}
Given an $n$-vertex edge-weighted undirected planar graph $G$, a distance oracle for $G$ is a data structure that can efficiently answer distance queries $d_G(u,v)$ between pairs of vertices $(u,v)$ in $G$. One way of achieving this is to simply store an $n\times n$-distance matrix where $n$ is the number of vertices. Each query can be answered in constant time but the space requirement is large.

If one is willing to settle for approximate distances, much more compact oracles exist. It has been shown that for any $\epsilon > 0$, there is a $(1+\epsilon)$-approximate distance oracle for $G$ of size $O(\frac 1\epsilon n\log n)$ which for any query pair $(u,v)$ outputs in time $O(1/\epsilon)$ an estimate $\tilde{d}_G(u,v)$ such that $d_G(u,v)\leq\tilde{d}_G(u,v)\leq(1+\epsilon)d_G(u,v)$ (Thorup~\cite{OraclePlanarThorup} and Klein~\cite{OraclePlanarKlein}).

The oracles of Thorup and of Klein both rely on a recursive decomposition of $G$ using shortest path separators from a shortest path tree $T$: first $G$ is decomposed into two subgraphs with such a separator and then the two subgraphs are recursively decomposed. An important observation is that for any vertex $u$ and any shortest path separator $S$, there is a size $O(1/\epsilon)$ set $P$ of so called portals on $S$ which are vertices such that for any $w\in S$, there exists a $p\in P$ such that $d_G(u,p) + d_S(p,w)\leq (1+\epsilon)d_G(u,w)$. Thus, to get an approximate distance from $u$ to any $w\in S$ only $O(1/\epsilon)$ distances $d_G(u,p)$ need to be stored in addition to distances in $T$. The oracle stores distances from $u$ to portals on each of the $O(\log n)$ separators above $u$ in the recursive decomposition tree, giving a total space of $O(\frac 1 \epsilon n\log n)$. To answer a $uv$-query, the oracle identifies the nearest common ancestor separator $S_{uv}$ of $u$ and $v$ in the recursive decomposition. As $S_{uv}$ separates $u$ and $v$, distances from $u$ and from $v$ to their respective portal sets on $S_{uv}$ can be combined to obtain a $(1+\epsilon)$-approximate distance estimate in $O(1/\epsilon)$ time.

Additional oracles for planar graphs have since been presented. Kawarabayashi, Klein, and Sommer~\cite{LinSpaceOraclesPlanar} showed how to improve space to $O(n)$ at the cost of an increase in query time to $O(\frac 1{\epsilon^2}\log^2n)$ and gave generalizations to bounded-genus and minor-free graphs. Kawarabayashi, Sommer, and Thorup~\cite{CompactOraclesPlanar} focused on improving the space-query time tradeoff and gave an oracle with $\overline{O}(n\log n)$ space and $\overline{O}(1/\epsilon)$ query time, where $\overline{O}(\cdot)$ hides $\log\log n$ and $\log(1/\epsilon)$ factors, thereby essentially improving the query time-space product from $O(\frac 1{\epsilon^2}n\log n)$ to $O(\frac 1\epsilon n\log n)$. They also showed that if the average edge weight is poly-logarithmic, $\overline{O}(n)$ space and $\overline{O}(1/\epsilon)$ query time can be obtained.

Except for planar graphs with poly-logarithmic average edge weights, every oracle presented so far has a query time-space product of order $\Theta(n\log n)$ (ignoring the dependency on $\epsilon$). We finally break this barrier by giving an oracle with $O(n(\log\log n)^2)$ space and $O((\log\log n)^3)$ query time. The exact bounds are given in the following theorem.
\begin{theorem}\label{Thm:Main}
Let $G$ be an $n$-vertex undirected edge-weighted planar graph. For any $0 < \epsilon < 1$, there is a $(1+\epsilon)$-approximate distance oracle of $G$ with query time $O((\log\log n)^3/\epsilon^2 + \log\log n\sqrt{\log\log((\log\log n)/\epsilon^2)}/\epsilon^2)$ and space $O(n((\log\log n)^2/\epsilon + (\log\log n)/\epsilon^2))$.
\end{theorem}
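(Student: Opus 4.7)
My plan is to build on the Thorup--Klein framework (recursive shortest-path separator decomposition with $O(1/\epsilon)$ portals per separator) but to drastically compress the per-vertex storage from the standard $O((\log n)/\epsilon)$ portal records down to $O((\log\log n)^2/\epsilon + (\log\log n)/\epsilon^2)$.  The two additive terms in the space bound suggest two qualitatively different kinds of stored information per vertex: coarse portal-distance records at $O((\log\log n)^2)$ strategically chosen separators of the Thorup--Klein decomposition tree, together with $O(\log\log n)$ layers of $O(1/\epsilon^2)$-sized landmark tables used for local refinement whenever the ``correct'' separator is not one of the directly stored ones.

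First I would construct a nested scale hierarchy with $k=\Theta(\log\log n)$ scales.  Within each scale $i$ I run the Thorup--Klein separator decomposition, but I store portal records at only $\Theta(\log\log n)$ selected levels per scale, chosen so that every within-scale separator is reachable from a stored level through a short sequence of landmark lookups.  This yields $\Theta((\log\log n)^2)$ portal records per vertex overall (the first space term).  Additionally, for each vertex and each of the $\Theta(\log\log n)$ scales, I store approximate distances to an $O(1/\epsilon^2)$-sized landmark set on the scale boundary, giving the second space term.  A query $(u,v)$ then descends the hierarchy scale by scale; at each scale it either recognises that the nearest-common-ancestor separator of $u$ and $v$ is one of the stored ones (and answers directly via the portal lookup of Thorup and Klein) or refines by selecting a best landmark and recursing on the next scale.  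Spending $O((\log\log n)/\epsilon^2)$ time per scale over $\Theta(\log\log n)$ scales gives the $O((\log\log n)^3/\epsilon^2)$ main term of the query bound; the secondary $\sqrt{\log\log((\log\log n)/\epsilon^2)}$ factor is the signature of a Beame--Fich-style fast-predecessor structure invoked on the landmark universe of size $O((\log\log n)/\epsilon^2)$ to locate the best landmark in $o(\log\log n)$ time per scale.

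The main obstacles are correctness and the query-time bound.  Unlike the original oracle, where a single portal lookup at the NCA separator suffices, the approximate distance here is composed across up to $\Theta(\log\log n)$ scales, so error accumulation must be controlled by assigning each scale a budget of $\Theta(\epsilon/\log\log n)$.  Proving that the stored landmark sets are dense enough relative to the geometry of shortest paths crossing scale boundaries, so that the compounded error still sums to $(1+\epsilon)$, is the key approximation argument — it will likely require a careful selection rule for which portals and landmarks to retain at each scale, exploiting the fact that the sequence of relevant separators along the decomposition tree forms a monotone chain inside a single shortest path tree.  Achieving the predecessor bound additionally requires encoding landmark distances in a word-RAM-friendly way, which I expect to be routine once the geometric hierarchy is fixed.
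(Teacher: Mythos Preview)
Your high-level intuitions --- an $\epsilon/\Theta(\log\log n)$ error budget per hop, $O(\log\log n)$ hops, and a fast word-RAM step on a set of size $O((\log\log n)/\epsilon^2)$ --- are in the right direction, but the concrete mechanism you sketch is not what the paper does, and as written the plan has a real gap. The paper uses \emph{no} nested scale hierarchy. It builds a single recursive shortest-path-separator decomposition tree $\mathcal T$ of height $O(\log n)$ and overlays a \emph{shortcut system}: a region at depth $d$ gets a pointer to its ancestor $2^j$ levels up for every $j$ with $2^j\mid d$, so any leaf-to-ancestor walk takes $O(\log\log n)$ shortcuts. For each shortcut $R_1\to R_2$ and each $w\in\delta R_1\setminus\delta R_2$ it stores a portal set on $\delta R_2$ of size $O(1/\epsilon_1)$ with $\epsilon_1=\Theta(\epsilon/\log\log n)$; the $O(n(\log\log n)^2/\epsilon)$ space is $O(\log\log n)$ such portal sets of size $O((\log\log n)/\epsilon)$ charged to each vertex, not ``$\Theta((\log\log n)^2)$ stored separators per vertex''. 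The query (Phase~I) walks the shortcuts from $u$'s starting region up to the child $C_u$ of the NCA, at each hop rebuilding a fresh $O(1/\epsilon_1)$-size portal set on the current boundary via a refined portal lemma; each hop costs $O(1/\epsilon_1^2)=O((\log\log n)^2/\epsilon^2)$, which is where the $(\log\log n)^3/\epsilon^2$ term comes from.

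Your proposal is missing exactly this mechanism: you assert that stored levels can be ``chosen so that every within-scale separator is reachable \ldots\ through a short sequence of landmark lookups'' but give no selection rule or lookup procedure --- that is the entire content of the construction. You also miss the reason for the second phase and hence for the $1/\epsilon^2$ space term. Because degree-$2$ boundary vertices are contracted away as one descends $\mathcal T$, the vertices of the NCA separator $S_{uv}$ are \emph{not} all present on $\delta C_u$; they are scattered over $\delta R$ for various ancestors $R$ of $C_u$, and Phase~I only produces distances \emph{within} $C_u$, not in $G$. The paper's Phase~II walks shortcuts from $C_u$ to the root, using precomputed ``dual portal sets'' and $(w,G,1+\epsilon_2)$-portal sets with $\epsilon_2=\Theta(\epsilon)$ to upgrade to global approximate distances to all of $S_{uv}$; its $O(n\log\log n/\epsilon_2^2)$ storage is the $O(n(\log\log n)/\epsilon^2)$ term. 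Finally, the $\sqrt{\log\log((\log\log n)/\epsilon^2)}$ factor arises from Han--Thorup integer \emph{sorting} of the $O((\log\log n)/\epsilon^2)$ output portals along $S_{uv}$ when merging the $u$- and $v$-side answers, not from a predecessor search inside the hierarchy; and note that your own query-time arithmetic ($\Theta(\log\log n)$ scales times $O((\log\log n)/\epsilon^2)$ per scale) gives $(\log\log n)^2/\epsilon^2$, not the stated $(\log\log n)^3/\epsilon^2$.
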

Our dependency on $\epsilon$ in the query time-space product is worse than in~\cite{OraclePlanarKlein, OraclePlanarThorup} but still only a low-degree polynomial in $1/\epsilon$; it is roughly $1/\epsilon^3$ when $1/\epsilon = O(\log\log n)$ and roughly $1/\epsilon^4$ otherwise, compared to $1/\epsilon^2$ in~\cite{OraclePlanarKlein, OraclePlanarThorup}. Focus in this paper is on improving the dependency on $n$ and not $\epsilon$ which we regard as fixed. Our data structure uses randomization due to hashing and fast integer sorting. Space and query time for hashing can be made worst-case with expected construction time. For sorting, we use the algorithm of Han and Thorup~\cite{IntegerSorting} to get the bound in Theorem~\ref{Thm:Main}. To make our data structure deterministic, we can use a standard optimal comparison sort or the slightly faster deterministic integer sorting algorithm of Fredman and Willard~\cite{DetIntSorting}. With the latter, we get a deterministic query time of our data structure of $O((\log\log n)^3/\epsilon^2 + (\log\log n\log((\log\log n)/\epsilon^2))/(\epsilon^2\log\log((\log\log n)/\epsilon)))$.

A main difference between our oracle and those of Thorup and of Klein is that we do not store distances from each vertex $u$ to portals on all $O(\log n)$ separators above $u$. Instead we save space by introducing a shortcutting system to the recursive decomposition tree so that we can get from $u$ to any separator above it using $O(\log\log n)$ shortcuts. Each shortcut corresponds to a pair of separators on a root-to-leaf path in the tree and we essentially store approximate distances from vertices on one separator to portals on the other separator closer to the root of the recursive decomposition tree. This complicates the query algorithm and its analysis since approximate distances are found in $O(\log\log n)$ steps instead of just one.

\subsection{Related work}
Thorup~\cite{OraclePlanarThorup} also gave an oracle for planar \emph{digraphs} which for polynomially bounded edge weights achieves $O(\frac 1\epsilon n\log^2n)$ space and close to $O(1/\epsilon)$ query time. Exact oracles for planar digraphs with tradeoff between space and query time have been studied but require near-quadratic space for constant or near-constant query time~\cite{ExactOraclePlanarMozesSommer,CWNPHD}.

For a general undirected $n$-vertex graph $G$, Thorup and Zwick showed that for any parameter $k\in\mathbb N$, there is a $(2k-1)$-approximate distance oracle for $G$ with $O(kn^{1+1/k})$ space and $O(1/k)$ query time which is believed to be essentially optimal due to a girth conjecture of Erd\H{o}s~\cite{Erdos}. Variations and slight improvements have since been presented; see, e.g.,~\cite{ChechikOracleGeneral,MendelNaor,PatrascuRoditty,CWNOracleGeneral1,CWNOracleGeneral2}.

\subsection{Organization of the paper}
In Section~\ref{sec:Prelim}, we give some basic definitions, notation, and a variant of a standard recursive decomposition of planar graphs with shortest path separators. We then present our oracle in Sections~\ref{sec:PhaseI} and~\ref{sec:PhaseII}. Section~\ref{sec:PhaseI} presents the first phase of the query algorithm. This phase computes approximate distances from query vertices $u$ and $v$ to certain portals on the nearest common ancestor separator $S_{uv}$ of $u$ and $v$ in the recursive decomposition tree but where approximate shortest paths are restricted to the child regions of $S_{uv}$ in the recursive decomposition. The second phase in Section~\ref{sec:PhaseII} uses the output of the first phase to then find approximate distances from $u$ and $v$ to portals on $S_{uv}$ in the entire graph $G$. A main challenge is that the vertices of $S_{uv}$ are not represented explicitly on all recursion levels but on various levels on the path from $S_{uv}$ to the root of the recursive decomposition; the second phase traverses this path to find the desired approximate distances and portals. From the output of Phase II, obtaining an approximate $uv$-distance can then be done efficiently, as we show in Section~\ref{sec:ApproxDist}. Finally, we make some concluding remarks in Section~\ref{sec:ConclRem}.

\section{Preliminaries}\label{sec:Prelim}
For a graph $G$, denote by $V(G)$ and $E(G)$ its vertex set and edge set, respectively. When convenient, an edge $(u,v)$ with weight $w$ is denoted $(u,v,w)$. For a rooted tree $T$ and two nodes $u,v\in T$, denote by $\nca_T(u,v)$ the nearest common ancestor of $u$ and $v$ in $T$. For a path $P$ and for two vertices $u,v\in P$, $P[u,v]$ denotes the subpath of $P$ between $u$ and $v$. As in previous papers on approximate distance oracles for planar graphs, we assume the Word-RAM model with standard instructions.

\subsection{Recursive Decomposition}\label{subsec:RecDecomp}
In the following, $G = (V,E)$ denotes an $n$-vertex, undirected, edge-weighted planar embedded graph and $T$ is a shortest path tree in $G$ rooted at a source vertex $s$. By performing vertex-splitting, we may assume that $G$ has degree three.

The oracle of Thorup keeps a recursive decomposition of $G$ consisting of shortest path separators. Our oracle obtains a similar decomposition but we need it to have some additional properties which we focus on in the following.

Denote by $G_\Delta$ an arbitrary triangulation of $G$ where edges of $E(G_\Delta)\setminus E$ are called \emph{pseudo-edges} and are given infinite weight. A shortest path separator of $G_\Delta$ w.r.t.~an assignment of weights to triangles of $G_\Delta$ consists of a (possibly non-simple) cycle $C$ defined by two shortest paths $s\leadsto u$ and $s\leadsto v$ in $T$ and a non-tree edge $(u,v)$; the total weight of triangles on each side of $C$ is at most $\frac 2 3$ of the total weight of all triangles of $G_\Delta$. See~\cite{LiptonTarjan,OraclePlanarThorup} for details.

First, we decompose $G_\Delta$ into two subgraphs enclosed by $C$; both subgraphs inherit the edges and vertices of $C$, for a suitable weight function on triangles. Degree two vertices $u$ are removed from each subgraph by replacing their incident edges $(v,u)$ and $(u,w)$ with a single edge $(v,w)$ whose weight is the sum of weights of $(v,u)$ and $(u,w)$. Then the two subgraphs are recursively decomposed until constant-size subgraphs are obtained.

For each subgraph $R$ obtained in the above recursive procedure, we form a \emph{region} $R'$ as follows. Subgraph $R$ contains $O(\log n)$ separators of the form $s\leadsto u\rightarrow v\leadsto s$, namely those formed from the root of the recursion down to $R$. These are separators formed from the root of the recursion down to $R$. Region $R'$ is obtained from $R$ by removing pseudo-edges, except those contained in the separators that formed $R$, and then removing degree two vertices as above. Let $H_1,\ldots,H_k$ be the faces of $R$ containing vertices/edges of $G$ not belonging to $R$. Each $H_i$ is a separator $s\leadsto u\rightarrow v\leadsto s$ (possibly with some degree two vertices removed) and we call $H_i$ a \emph{hole} (of $R'$). Ancestor/descendant relations between regions are defined according to their nesting in the recursive decomposition tree which we denote by $\mathcal T$. In Section~\ref{subsec:ConstructRecDecomp}, we show how to pick shortest path separators such that
\begin{enumerate}
\item there are $O(n)$ regions in total each having $O(1)$ holes,
\item for each child $R'$ of each region $R$, every hole of $R$ (regarded as the closed set of the plane inside the hole) is fully-contained in a hole of $R'$,
\item the height of $\mathcal T$ is $O(\log n)$.
\end{enumerate}
When we refer to a recursive decomposition in the following, we assume it has these properties. Observe that for all regions $R$ and $R'$ where $R$ is a descendant of $R'$, $V(R)\subseteq V(R')$. This follows since $R$ is obtained from $R'$ by eliminating subgraphs of $R'$ and degree two vertices.

\subsection{Constructing a recursive decomposition}\label{subsec:ConstructRecDecomp}
In the following, refer to the triangulated subgraphs obtained when recursively decomposing $G_\Delta$ as \emph{$\Delta$-regions}; holes of $\Delta$-regions are defined to be the holes of the corresponding regions with pseudo-edges added to form the triangulation. We now show how to pick the separators so that the following \emph{region conditions} are satisfied:
\begin{enumerate}
\item each $\Delta$-region of even resp.~odd depth in $\mathcal T$ has at most three resp.~four holes,
\item for each $\Delta$-region of even depth in $\mathcal T$, if it contains exactly $f$ faces of $\Delta G$, each of its grandchildren contain at most $2f/3$ faces of $\Delta G$,
\item for each child $R'$ of each $\Delta$-region $R$, every hole of $R$ (regarded as the closed set of the plane inside the hole) is fully-contained in a hole of $R'$.
\end{enumerate}
Furthermore, we show that the number of regions is $O(n)$ and that the height of $\mathcal T$ is $O(\log n)$. This gives the desired properties for a recursive decomposition, as stated above.

All separators are formed from shortest path tree $T$. The recursion stops once $\Delta$-regions with at most two faces of $\Delta G$ are obtained. For a $\Delta$-region $R$ of even depth in $\mathcal T$, assume it has at most three holes (this trivially holds for $G_\Delta$ at the root of $\mathcal T$). We assign a unit of weight to each face of $R$ that is also a face of $G_\Delta$. All other faces of $R$ are given weight $0$. Using the subtree of $T$ in $R$, we find a balanced shortest path separator w.r.t.~this weight function. If $f$ is the number of faces of $\Delta G$ in $R$ then each of the subgraphs formed contain at most $2f/3$ of these faces. Furthermore, each of these subgraphs have at most four holes since at most one new hole is formed when removing one side of the separator.

Now consider a $\Delta$-region $R$ of odd depth in $\mathcal T$ and assume it has at most four holes. If $R$ has at most two holes, we decompose it as described above for even-depth $\Delta$-regions; the sub-$\Delta$-regions formed will have at most three holes. Otherwise, we define a different weight function than that above: for each hole $H$ of $R$, exactly one of the triangles of $R$ contained in $H$ is given unit weight. All other triangles of $R$ are given weight $0$. The shortest path separator w.r.t.~this weight function ensures that each of the two sub-$\Delta$-regions of $R$ formed will have at most three holes, namely at most $\lfloor\frac 2 3\cdot 4\rfloor = 2$ holes inherited from $R$ and one hole formed by removing one side of the separator.

It is now clear that the first region condition holds. The second region condition holds as well from the above and from the fact that each child of an odd-depth $\Delta$-region $R$ cannot contain more faces of $\Delta G$ than $R$.

Assume for the sake of contradiction that the third region condition does not hold for some $\Delta$-region $R$ and one of its children. Then the separator that was used to decompose $R$ must have used one of the pseudo-edges in the triangulated hole $H$ of $R$. Since the boundary of $H$ consists of two shortest paths from $T$ and a single pseudo-edge, one of the children of $R$ must be fully contained in $H$, which for both choices of weight function above gives an unbalanced separator, a contradiction. We conclude that the third region condition holds.

The number of $\Delta$-regions is asymptotically bounded by the number of leaves of the recursive decomposition tree. Each leaf $\Delta$-region has a parent with at least three faces of $\Delta G$. For any two distinct $\Delta$-regions $R$ and $R'$ that are parents of leaf $\Delta$-regions, no face of $\Delta G$ belongs to both $R$ and $R'$. It follows that there are only $O(n)$ parents of leaf $\Delta$-regions. Since the decomposition tree $\mathcal T$ is binary, the total number of leaf $\Delta$-regions is $O(n)$. This implies that the total number of $\Delta$-regions, and hence regions, is $O(n)$.

It follows from the second region condition and the termination condition for the recursion that $\mathcal T$ has height $O(\log n)$.

\subsection{Region boundary structure}
Let $Q$ be a path in $T$ from $s$ to some vertex. For any subpath $Q[u,v]$ of $Q$, let $v_1,\ldots,v_k$ be those interior vertices of $Q$ having an incident edge of $E$ emanating to the left of $Q[u,v]$ when looking in the direction from $u$ to $v$. We order the vertices such that $Q[u,v] = u\leadsto v_1\leadsto\cdots\leadsto v_k\leadsto v$. The \emph{left side} of $Q[u,v]$ is the $uv$-path with edges $(u,v_1), (v_1,v_2),\ldots,(v_{k-1},v_k), (v_k,u)$ where each edge has weight equal to the weight of the corresponding subpath of $Q$. We define the \emph{right side} of $Q[u,v]$ similarly.

For a region $R$, denote by $\delta R$ the \emph{boundary} of $R$ which is the subgraph of $R$ contained in the $O(1)$ holes of $R$. For each hole $H$, it will be convenient to regard the two shortest paths in $T$ bounding $H$ as disjoint in $\delta R$ by replacing one path with its left side and the other with its right side; vertices shared by the original two paths are regarded as distinct in the two new paths, see Figure~\ref{fig:EulerTour}. Note that $\delta R$ represented in this way is now a single face of $R$.
\begin{figure}
\centerline{\scalebox{0.7}{\input{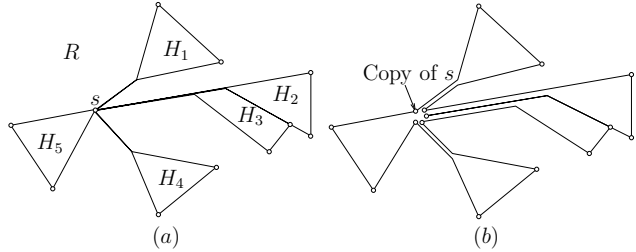}}}
\caption{(a): A region R with $\delta R$ bounding holes $H_1,\ldots,H_5$. Corners are shown as white vertices. (b): Representation of $\delta R$ obtained by ``cutting open'' each hole.}
\label{fig:EulerTour}
\end{figure}
Cutting open each hole like this ensures that paths in $R$ do not cross its boundary.

A vertex (edge) of $R$ that is not contained in $\delta R$ is called \emph{interior}. Denote by $\mathcal P_R$ the $O(1)$ shortest paths from the shortest path tree $T$ that bound $\delta R$. A vertex which is an endpoint of a path in $\mathcal P_R$ is called a \emph{corner} of $R$; see Figure~\ref{fig:EulerTour}. We let $C(R)$ denote the set of $O(1)$ corners of $R$. In some places, we will instead consider the face $\delta_G R$ obtained from $\delta R$ by replacing each edge with its corresponding path in $G$. The following lemma gives the structure of $\delta R$.
\begin{lemma}\label{Lem:deltaR}
For any region $R$, $\delta R$ consists of $O(1)$ subpaths each of which is either a single (possibly pseudo-)edge between two corners, an edge corresponding to a path in $T$ from $s$ to some corner of $R$, or the left or right side of a path in $T$.
\end{lemma}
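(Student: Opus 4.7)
The plan is to analyze $\delta R$ hole by hole. By the region conditions from Section~\ref{subsec:ConstructRecDecomp}, $R$ has only $O(1)$ holes, so if each hole contributes a constant number of subpaths of the claimed forms, the total of $O(1)$ follows.

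For each hole $H$ of $R$, I would trace back through the recursive decomposition to locate the separator cycle that originally cut $H$ off from the interior of $R$. By construction this separator has the form $s\leadsto u\rightarrow v\leadsto s$ and consists of two shortest paths $Q_1 = s\leadsto u$ and $Q_2 = s\leadsto v$ in $T$ together with a single (possibly pseudo-)edge $(u,v)$; the endpoints $s,u,v$ are corners of $R$. Under the cutting-open representation described immediately before the lemma, the contribution of $H$ to $\delta R$ consists of three pieces: the non-tree edge $(u,v)$ --- directly of type (a) --- and $Q_1$ and $Q_2$ appearing along $\delta R$ as left and right sides.

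For each of $Q_1,Q_2$, say $Q_i$, I would then argue that an interior vertex $w$ of $Q_i$ survives (is not contracted by degree-two reduction) in $R$ if and only if $w$ has an incident edge of $E$ emanating to the $R$-side of $Q_i$. The easy direction is immediate, since such an emanating edge forces $\deg_R(w)\geq 3$. For the other direction, one uses planarity together with the fact that the separator cycle $s\leadsto u\to v\leadsto s$ bounds a topological disk: any edge of $E$ incident to $w$ on the hole side of $Q_i$ lies entirely inside the subgraph that was cut away when $H$ was formed, so it is not in $R$. Hence the surviving interior vertices of $Q_i$ are exactly the vertices $v_1,\dots,v_k$ in the paper's definition of the left (respectively right) side of $Q_i$. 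When $k\geq 1$ the $Q_i$-piece of $\delta R$ is precisely this left or right side in $T$, giving type (c); when $k=0$ all interior vertices of $Q_i$ are contracted and the piece becomes a single edge from $s$ to a corner of $R$, with weight equal to the length of $Q_i$, giving type (b).

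The main obstacle is the hard direction of the equivalence in the previous paragraph: besides the planarity argument, one must rule out the possibility that a separator applied earlier in the construction of $R$ already removed some $R$-side emanating edge of $w$ before the separator forming $H$ was applied. Here I would use the third region condition --- that holes of a region are fully contained in holes of each child --- together with the fact that all separators share the source $s$, to argue inductively that no earlier separator can intersect $Q_i$ in its interior and hence no such earlier cut can strip away an $R$-side emanating edge of $w$ without also excising $w$ itself. The handling of pseudo-edges and the bookkeeping of degree-two reductions performed at intermediate levels of the recursion requires some care but does not affect the main structural argument.
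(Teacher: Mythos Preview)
Your approach is considerably more elaborate than the paper's two-line argument: partition $\delta R$ into corner-to-corner subpaths (there are $O(1)$ corners, hence $O(1)$ subpaths), then classify each subpath by whether it is incident to one hole or two. If one, it is the left or right side of a tree path; if two, every interior vertex has both sides lying in holes, hence is degree two in $R$, gets contracted, and the subpath becomes a single edge from $s$ to a corner.

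Your per-hole analysis has a real gap. Different holes routinely share an initial segment of their tree paths, since all separator paths emanate from the common root $s$ of $T$ (this is visible in Figure~\ref{fig:EulerTour}). On such a shared segment neither side is the $R$-side, so all interior vertices get contracted and the segment becomes a single edge of type~(b). But this shared segment is only a proper prefix of your $Q_i$: the path $Q_i$ splits at a branching vertex $c$ --- a corner of $R$ that is \emph{not} among the three vertices $s,u,v$ you identified for hole $H$ --- into a shared part $s\leadsto c$ of type~(b) and an unshared part $c\leadsto u$ of type~(c). Your dichotomy at the level of the whole $Q_i$ (either it is a left/right side when $k\geq 1$, or it collapses entirely when $k=0$) misses this mixed case, and your claim that ``the endpoints $s,u,v$ are corners of $R$'' is true but not exhaustive. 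The fix is exactly what the paper does: partition at \emph{all} corners, including the branching points where paths from different holes diverge, and then classify each resulting corner-to-corner piece; your survival argument for degree-two reduction then applies cleanly to each such piece. The inductive worry you raise about earlier separators stripping $R$-side edges is then also unnecessary, since the classification is read off directly from the final $R$.
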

\begin{proof}
Consider partitioning $\delta R$ into subpaths each starting and ending in a corner and with no interior corner vertices. There are only $O(1)$ such subpaths and each of them is either an edge or pseudo-edge ending in corners or the subpath corresponds to a path in $T$ incident to one or two holes. If it is incident to one hole, it is the left or right side of a path in $T$ and if it is incident to two holes, it is a single edge corresponding to a path ending in $s$.
\end{proof}

When convenient, we identify regions with their corresponding nodes in $\mathcal T$. For two regions $R$ and $R'$, denote by $R\leadsto R'$ the simple path from $R$ to $R'$ in $\mathcal T$.

\subsection{Portals}
Let $G = (V,E)$ be an undirected edge-weighted (not necessarily planar) graph, let $u\in V$, and let $Q$ be a shortest path in $G$. Thorup~\cite{OraclePlanarThorup} showed that for any given $\epsilon > 0$, $V(Q)$ contains a set $P$ of $O(1/\epsilon)$ \emph{portals} such that for any $v\in V(Q)$, there is a $p\in P$ such that $d_G(u,p) + d_P(p,v)\leq (1+\epsilon)d_G(u,v)$.

Let $R$ be a region and let $P\subseteq V(\delta R)$. Given a value $\varepsilon > 0$, a vertex $u\in V$, and an undirected (possibly non-planar) graph $H$ with $P\cup\{u\}\subseteq V(H)\subseteq V$ where every edge $(v_1,v_2)$ in $H$ corresponds to a path $v_1\leadsto v_2$ in $G$ of the same weight. Then $P$ is a \emph{$(u,H,1+\varepsilon)$-portal set of $\delta R$} if for any $v\in V(\delta R)$, there exists a vertex $p\in P$ such that $d_H(u,p) + d_{\delta R}(p,v) \leq (1+\varepsilon)d_{H\cup\delta R}(u,v)$. We call $p$ a \emph{portal} (of $P$). Applying the portal construction of Thorup~\cite{OraclePlanarThorup} (see also the proof of Lemma~\ref{Lem:Portal}) to each path in $\mathcal P_R$ gives a $(u,H,1+\varepsilon)$-portal set of $\delta R$ of size $O(|\mathcal P_R|/\varepsilon) = O(1/\varepsilon)$. Define $(u,H,1+\varepsilon)$-portal set of $\delta_G R$ similarly; its size is $O(1/\varepsilon)$ as well.

We need a slightly more general result regarding portals than that of Thorup which we state in the following somewhat technical lemma. It roughly says that if we have a graph $H$ representing some subgraph of $G$ such that distances in $G$ from a vertex $u$ to a shortest path $Q$ in $G$ are approximated well in $H$, then we can pick a small number of portals from $H$ along $Q$ such that these distances are also approximated well with shortest paths in $H$ from $u$ to $Q$ through these portals.
\begin{lemma}\label{Lem:Portal}
Let $Q$ be a shortest path in an edge-weighted undirected graph $G = (V,E)$ and let $u\in V$ and $\epsilon',\epsilon'' > 0$ be given. Let $H$ be an undirected graph with $u\in V(H)\subseteq V$ such that for any $v_1,v_2\in V(H)$, $d_H(v_1,v_2)\geq d_G(v_1,v_2)$. Assume that for any $v\in V(Q)$, there is a $v'\in V(H)\cap V(Q)$ such that $d_H(u,v') + d_Q(v',v)\leq (1+\epsilon')d_G(u,v)$. Then there is a subset $P_H$ of $V(H)\cap V(Q)$ of size $O(1/\epsilon'')$ such that for any $v\in V(Q)$ there is a $p\in P_H$ such that $d_H(u,p) + d_Q(p,v)\leq (1+\epsilon')(1+\epsilon'')d_G(u,v)$.
\end{lemma}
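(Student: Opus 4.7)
The plan is to use Thorup's standard portal lemma in $G$ as a scaffolding and then perturb each selected portal to a nearby vertex inside $V(H) \cap V(Q)$ using the hypothesis. This avoids having to redo Thorup's geometric placement argument from scratch in the $H$-metric.

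First, I would apply Thorup's original portal lemma to the shortest path $Q$, the vertex $u$, and parameter $\epsilon''$ in $G$, yielding a set $P \subseteq V(Q)$ of size $O(1/\epsilon'')$ such that for every $v \in V(Q)$ there is some $p \in P$ with $d_G(u,p) + d_Q(p,v) \leq (1+\epsilon'')\, d_G(u,v)$. The portals in $P$ need not lie in $V(H)$, so for each $p \in P$ I invoke the hypothesis with the choice $v := p$ (which is legal since $p \in V(Q)$) to obtain a vertex $p'(p) \in V(H) \cap V(Q)$ satisfying $d_H(u,p'(p)) + d_Q(p'(p),p) \leq (1+\epsilon')\, d_G(u,p)$. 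I then set $P_H := \{p'(p) : p \in P\}$, so that $|P_H| \leq |P| = O(1/\epsilon'')$.

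To verify the approximation guarantee, fix $v \in V(Q)$, pick $p \in P$ as in Thorup's lemma, and let $p' = p'(p) \in P_H$ be its replacement. Combining the triangle inequality for $d_Q$, the hypothesis at $p$, factoring out $(1+\epsilon')$ (valid since $\epsilon' > 0$), and finally Thorup's bound at $p$ gives
\begin{align*}
d_H(u,p') + d_Q(p',v)
& \leq \bigl(d_H(u,p') + d_Q(p',p)\bigr) + d_Q(p,v) \\
& \leq (1+\epsilon')\, d_G(u,p) + d_Q(p,v) \\
& \leq (1+\epsilon')\bigl(d_G(u,p) + d_Q(p,v)\bigr) \\
& \leq (1+\epsilon')(1+\epsilon'')\, d_G(u,v),
\end{align*}
which is exactly the bound required.

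I do not foresee a significant obstacle: the argument is essentially immediate once one observes that Thorup's guarantee composes cleanly with a single application of the hypothesis at each of the $O(1/\epsilon'')$ scaffolding portals, the glue being the triangle inequality on $Q$. Note that the dominance condition $d_H \geq d_G$ plays no explicit role in the calculation; it is there mainly to guarantee that the hypothesis $d_H(u,v') + d_Q(v',v) \leq (1+\epsilon')\, d_G(u,v)$ is not vacuous.
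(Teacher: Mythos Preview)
Your proof is correct, and it is genuinely different from the paper's argument. You reduce to Thorup's portal lemma in $G$ as a black box and then pull each portal into $V(H)\cap V(Q)$ via the hypothesis; the chain of inequalities is sound, and indeed you never need the dominance assumption $d_H\geq d_G$. The paper instead reruns Thorup's greedy construction \emph{inside} $H$: it picks $p_0=\arg\min_{p\in V(H)\cap V(Q)}d_H(u,p)$, walks along $Q$ adding a new portal whenever $d_H(u,p_j)+d_Q(p_j,\cdot)>(1+\epsilon'')d_H(u,\cdot)$, and bounds the count by a potential $\Phi(v)=d_H(u,v)+d_Q(v,p_k)$; the dominance hypothesis is what makes that potential argument close.

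The trade-off is this: your route is shorter and, as you observe, formally stronger (the dominance hypothesis becomes superfluous for the bare existence statement). The paper's route, however, yields an explicit algorithm that needs only $H$-distances and the ordering of $V(H)\cap V(Q)$ along $Q$; this is exactly what is invoked later in Section~\ref{subsec:TimeStretch}, where at query time only $H_i$ (not $G$-distances) is at hand and the portal set must be built in $O(1/\epsilon_1^2)$ time. Your reduction implicitly assumes access to $d_G(u,\cdot)$ to place Thorup's scaffolding portals, which is fine for proving the lemma but would not plug into the query algorithm without further work.
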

\begin{proof}
The construction is similar to that of Thorup. The first portal $p_0$ added to $P_H$ is the vertex $v\in V(H)\cap V(Q)$ minimizing $d_H(u,v)$. Let $t$ be an endpoint of $Q$. We show how to construct $P_H\cap Q[p_0,t]$; the same construction is done for the other subpath of $Q$.

Let $p_j$ be the latest portal added to $P_H$ and traverse $Q[p_j,t]$ towards $t$ until encountering a vertex $p_{j+1}\in V(H)$ such that $d_H(u,p_j) + d_Q(p_j,p_{j+1}) > (1+\epsilon'')d_H(u,p_{j+1})$. Portal $p_{j+1}$ is then the next portal added to $P_H$. The process stops when reaching the vertex of $V(H)\cap V(Q)$ closest to $t$; this vertex is added as the final portal $p_k$ to $P_H$.

Let $v\in V(Q)$ be given. By assumption, there is a $v'\in V(H)\cap V(Q)$ such that $d_H(u,v') + d_Q(v',v)\leq (1+\epsilon')d_G(u,v)$. The above construction ensures that there is a $p\in P_H$ such that
\[
  d_H(u,p) + d_Q(p,v') + d_Q(v',v) \leq (1+\epsilon'')(d_H(u,v') + d_Q(v',v)) \leq (1+\epsilon')(1+\epsilon'')d_G(u,v).
\]

It remains to prove that $|P_H| = O(1/\epsilon'')$. By symmetry, it suffices to show $|P_H\cap Q[p_0,p_k]| = O(1/\epsilon'')$. For any $v\in V(H)\cap V(Q[p_0,p_k])$, define potential $\Phi(v) = d_H(u,v) + d_Q(v,p_k)$. For $j = 1,2,\ldots,k-1$,
\begin{align*}
\Phi(p_{j+1}) & = d_H(u,p_{j+1}) + d_Q(p_{j+1},p_k),\\
\Phi(p_j)    & = d_H(u,p_j) + d_Q(p_j,p_{j+1}) + d_Q(p_{j+1},p_k) > (1+\epsilon'')d_H(u,p_{j+1}) + d_Q(p_{j+1},p_k),
\end{align*}
so the potential is reduced by
\[
  \Phi(p_j) - \Phi(p_{j+1}) > \epsilon''d_H(u,p_{j+1})\geq \epsilon''d_H(u,p_0).
\]
Since $\Phi(p_0) = d_H(u,p_0) + d_Q(p_0,p_k)$ and $\Phi(p_k) = d_H(u,p_k)\geq d_Q(p_0,p_k) - d_H(u,p_0)$ (because $d_H(u,p_k) + d_H(u,p_0)\geq d_H(p_0,p_k)\geq d_G(p_0,p_k) = d_Q(p_0,p_k)$), we have $\Phi(p_0) - \Phi(p_k)\leq 2d_H(u,p_0)$, implying that $k = O(1/\epsilon'')$.
\end{proof}

\section{The First Phase}\label{sec:PhaseI}
Our data structure answers a query for vertex pair $(u,v)$ in two phases, \emph{Phase I} and \emph{Phase II}. In this section, we describe the preprocessing for Phase I and then the query part. The output and performance of Phase I applied to $u$ is stated in the following lemma ($v$ is symmetric). Phase I starts with $R_1(u)$ which is an arbitrary region $R$ (among at most two choices) such that $u\in\delta R$ and $u\notin\delta R'$ where $R'$ is the parent of $R$ in $\mathcal T$. Region $R_1(v)$ is defined similarly.
\begin{lemma}\label{Lem:PhaseI}
Phase I for vertex $u$ can be implemented to run in $O((\log\log n)^3/\epsilon^2)$ time using $O(n(\log\log n)^2/\epsilon)$ space. For the output $(P_u,\{\tilde d(u,p) | p\in P_u\})$, we have that for all $w\in V(\delta C_u)$, there is a $p\in P_u$ such that $d_{C_u}(u,w)\leq\tilde d(u,p) + d_{\delta C_u}(p,w)\leq(1+\epsilon/2)d_{C_u}(u,w)$ where $C_u$ is the child of $R_{uv} = \nca_{\mathcal T}(R_1(u),R_1(v))$ on the path in $\mathcal T$ from $R_{uv}$ to $R_u$. Furthermore, $|P_u| = O(1/\epsilon)$.
\end{lemma}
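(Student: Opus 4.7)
The plan is to build a shortcutting layer on $\mathcal T$ so that Phase I can climb from $R_1(u)$ to $C_u$ in $O(\log\log n)$ precomputed jumps rather than the $O(\log n)$ jumps one would get by walking the decomposition tree itself. Along each root-to-leaf path of $\mathcal T$ I would install, for each region $R$ at depth $d$, shortcut pointers to its ancestors at exponentially decreasing depths (e.g., depths $d - 2^j$ for $0 \le j \le \lceil\log d\rceil$), so that any ancestor of $R$ on that path is reachable by composing $O(\log\log n)$ pointers. Precomputed portal data is stored only at a canonical "entry region" $R_v = R_1(v)$ of each vertex $v$, which keeps the number of shortcut pairs in which each vertex participates to $O(\log\log n)$.

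Preprocessing then stores, for each such shortcut pair $(R, R')$ with $R$ a descendant of $R'$ and each vertex $v\in V(\delta R)$, a set $P(v,R,R') \subseteq V(\delta R')$ with precomputed approximate distances $\tilde d(v,p)$. The set is obtained by applying Lemma~\ref{Lem:Portal} independently to each of the $O(1)$ paths of $\mathcal P_{R'}$ (cf.~Lemma~\ref{Lem:deltaR}) with precision $\epsilon' = \Theta(\epsilon/\log\log n)$, so $|P(v,R,R')| = O(1/\epsilon') = O(\log\log n/\epsilon)$ and for every $w\in V(\delta R')$ some $p\in P(v,R,R')$ satisfies $\tilde d(v,p) + d_{\delta R'}(p,w) \le (1+\epsilon')d_{R'}(v,w)$. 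Total space is $O(n \cdot \log\log n \cdot \log\log n/\epsilon) = O(n(\log\log n)^2/\epsilon)$ as required; the actual $\tilde d(v,p)$ values are filled in by a bottom-up sweep of $\mathcal T$ that itself invokes the same construction inductively.

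The Phase I query then maintains the invariant that after iteration $i$ it holds a portal set $P^{(i)}\subseteq V(\delta R^{(i)})$ of size $O(\log\log n/\epsilon)$ with estimates $\tilde d(u,p)$ satisfying the portal property inside $R^{(i)}$ with cumulative multiplicative error $(1+\epsilon')^{O(i)}$, where $R^{(i)}$ is the $i$-th shortcut ancestor of $R_1(u)$ on the way to $C_u$. Starting from $R^{(0)}=R_1(u)$ and $P^{(0)}=\{u\}$ with $\tilde d(u,u)=0$, iteration $i$ forms the candidate set $\bigcup_{p\in P^{(i-1)}} P(p, R^{(i-1)}, R^{(i)})$ with estimated distances $\tilde d(u,p)+\tilde d(p,q)$; portals whose entry region is already shallower than $R^{(i)}$ are passed through directly. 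Then Lemma~\ref{Lem:Portal} is applied along the $O(1)$ paths of $\mathcal P_{R^{(i)}}$ with parameter $\Theta(\epsilon/\log\log n)$ to prune back to size $O(\log\log n/\epsilon)$. After $O(\log\log n)$ iterations we reach $R^{(k)}=C_u$, and a final pruning with parameter $\Theta(\epsilon)$ compresses the portal set to the claimed $|P_u|=O(1/\epsilon)$. The relevant $R^{(i)}$'s are identified in $O(\log\log n)$ time via an auxiliary $\nca_{\mathcal T}$ structure on the shortcut hierarchy.

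Correctness follows by induction: each iteration multiplies the approximation ratio by a factor $(1+\epsilon')^2$ (one factor from the precomputed shortcut, one from pruning), so after $O(\log\log n)$ iterations and the final compression the total ratio is at most $(1+\epsilon')^{O(\log\log n)}(1+\Theta(\epsilon)) \le 1+\epsilon/2$ for suitable hidden constants. Each iteration merges $O(\log\log n/\epsilon)$ sorted shortcut lists of length $O(\log\log n/\epsilon)$ and prunes linearly, for $O((\log\log n)^2/\epsilon^2)$ time; summing over $O(\log\log n)$ iterations gives the claimed $O((\log\log n)^3/\epsilon^2)$. The main obstacle I expect is reconciling two tensions in the shortcut scheme simultaneously: each vertex must participate as lower endpoint in only $O(\log\log n)$ shortcut pairs even though it can lie on $\Theta(\log n)$ separators in $\mathcal T$, and yet the shortcut from any $R^{(i)}$ to the desired $R^{(i+1)}$ must in fact be one of the stored pointers (or be realized directly by a portal whose entry region is already at least as shallow). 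Restricting storage to the canonical entry region $R_v$ of each vertex, together with the doubling pointer pattern and the case split based on the relative depth of $R_p$ and $R^{(i+1)}$, is the technical heart that makes everything fit.
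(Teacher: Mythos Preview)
Your high-level picture is right---shortcut the recursion tree, carry a small portal set upward with error accumulating as $(1+\epsilon_1)^{O(\log\log n)}$, then do one final coarser prune down to $|P_u|=O(1/\epsilon)$---and the time/stretch bookkeeping matches the paper. The problem is the storage scheme and, consequently, the query step that relies on it.

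You propose to store $P(p,R,R')$ only when $R=R_1(p)$, the canonical entry region of $p$. At query time, when you hold a portal $p\in P^{(i-1)}\subseteq V(\delta R^{(i-1)})$ and want to jump to $R^{(i)}$, you need a precomputed portal set for $p$ attached to the shortcut $R^{(i-1)}\to R^{(i)}$. But there is no reason $R^{(i-1)}=R_1(p)$: by Lemma~\ref{Lem:RegionPath}, $R_1(p)$ is the deepest region in $\mathcal R_p$, so generically it is a strict descendant of $R^{(i-1)}$ (and need not even lie on the path $R_1(u)\leadsto C_u$). Your two advertised cases---``$R_p$ shallower than $R^{(i)}$, pass $p$ through'' and ``otherwise, look up the stored set''---do not cover the situation where $p\in\delta R^{(i-1)}\setminus\delta R^{(i)}$ yet $R^{(i-1)}\neq R_1(p)$; the lookup simply fails. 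This already breaks the invariant in the second iteration.

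The paper closes this gap with two devices you are missing. First, it stores $P(w,R_1\to R_2)$ for \emph{every} $w\in\delta R_1\setminus\delta R_2$ on each shortcut, not just those with $R_1=R_1(w)$; the space is still $O(n(\log\log n)^2/\epsilon)$ because, by Lemma~\ref{Lem:RegionPath}, the shortcuts with $R_1\in\mathcal R_w$ and $R_2\notin\mathcal R_w$ are exactly those straddling the top of the path $\mathcal R_w$, and there are only $O(\log\log n)$ of them. This handles every portal that disappears on the jump. Second, for portals $p$ that survive (i.e., $p\in\delta R_{i-1}\cap\delta R_i$), the paper introduces an $O(1)$-size set $\delta(R_{i-1},R_i)$ of ``branching'' vertices per shortcut, each with its own stored portal set, together with the boundary cycle $C_i$; a surviving portal can then walk along $\delta R_i$ to a branching vertex and from there use the stored set. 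Your entry-region restriction gives the right space bound for the wrong reason, and neither the branching vertices nor the boundary cycle appear in your construction.
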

Note that any $uv$-path of $G$ must intersect $\delta C_u$. Phase I computes approximate distances to this separator but with the restriction that paths must be contained in $C_u$. Phase II is considered in Section~\ref{sec:PhaseII} and it extends the output of Phase I to approximate distances to $\delta_G C_u$ in the entire graph $G$.

\subsection{Preprocessing}
We start by constructing a recursive decomposition of $G$ and the associated decomposition tree $\mathcal T$. In order to traverse leaf-to-root paths of $\mathcal T$ efficiently, we set up a shortcutting system for $\mathcal T$. For any region $R\in\mathcal T$, let $i$ be the largest integer such that the depth of $R$ in $\mathcal T$ is divisible by $2^i$. For any integer $j$ between $0$ and $i$, we add a pointer from $R$ to the ancestor $R'$ $2^j$ levels above $R$. We refer to this pointer as a \emph{shortcut} and denote it by $R\rightarrow R'$. We can get from any region $R_1$ to any proper ancestor $R_2$ of $R_1$ by traversing only $O(\log\log n)$ shortcuts: first traverse the shortcut $R_1\rightarrow R'$ where $R'$ is the closest region to $R_2$ which is either $R_2$ or one of its descendants. Then recurse on pair $(R',R_2)$ until reaching $R_2$.


Before describing the preprocessing for Phase I, we need the following lemma. For any vertex $u\in V$, define $\mathcal R_u$ as the set of regions $R$ where $u\in\delta R\setminus C(R)$.
\begin{lemma}\label{Lem:RegionPath}
For all $u\in V$, regions of $\mathcal R_u$ form a subpath of a leaf-to-root path in $\mathcal T$.
\end{lemma}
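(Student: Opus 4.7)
The plan is to establish two structural claims: (a) any two regions in $\mathcal{R}_u$ are ancestor-related in $\mathcal{T}$, and (b) if $R_1 \prec R_2$ are both in $\mathcal{R}_u$, every region strictly between them on the ancestor chain is also in $\mathcal{R}_u$. Together these give the subpath conclusion. The proof relies on two propagation facts along ancestor chains: (i) \emph{boundary propagation} --- once $u \in \delta R$, we have $u \in \delta R'$ for every descendant $R'$ with $u \in V(R')$; and (ii) \emph{corner propagation} --- once $u \in C(R)$, we have $u \in C(R')$ for every such descendant. Fact (i) follows immediately from property~3 of the recursive decomposition (every hole of $R$ is contained in a hole of its child), which forces a boundary vertex of $R$ that remains in $R'$ to sit on the enclosing hole of $R'$.

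Granting (i) and (ii), claim (b) is a short contradiction: for any $R'$ strictly between $R_1$ and $R_2$ on the ancestor chain, (i) gives $u \in \delta R'$, and $u \in C(R')$ would force $u \in C(R_2)$ by (ii), contradicting $R_2 \in \mathcal{R}_u$. For claim (a), assume for contradiction that incomparable $R_1, R_2 \in \mathcal{R}_u$ exist; let $R_c = \nca_{\mathcal T}(R_1,R_2)$ with children $C_1 \supseteq R_1$, $C_2 \supseteq R_2$, and let $S$ be the separator splitting $R_c$. Since $V(R) \subseteq V(R')$ whenever $R$ descends from $R'$, we have $u \in V(C_1) \cap V(C_2)$, which forces $u$ to lie on $S$. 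I then case-split on $u$'s position on $S$. If $u$ is interior to one of the two tree-path portions of $S$, then $u$'s three edges (degree three holds by the vertex-splitting assumption) consist of two tree edges along $S$ and one third edge lying on exactly one side of $S$; the opposite child contains only the two tree edges at $u$, so $u$ has degree two there and is eliminated by the degree-two reduction --- contradicting $u \in V$ of that child. Otherwise $u$ coincides with an endpoint of the non-tree edge of $S$ and is an endpoint of a type-(a) subpath of $\delta C_1$ and of $\delta C_2$ by Lemma~\ref{Lem:deltaR}, so $u \in C(C_1) \cap C(C_2)$; corner propagation then gives $u \in C(R_1)$, contradicting $R_1 \in \mathcal{R}_u$.

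The main obstacle is corner propagation (ii). The case $u = s$ is trivial since $s$ is a corner in every region whose boundary it lies on (so $\mathcal{R}_s = \emptyset$). For $u \ne s$, Lemma~\ref{Lem:deltaR} tells us that $u \in C(R)$ means $u$ is incident to a non-tree (or pseudo-)edge $e$ forming a type-(a) subpath of $\delta R$, where $e$ is the non-tree edge of some ancestor separator used to build $R$. Iterating property~3 of the recursive decomposition shows that $e$ remains a boundary edge of every descendant $R'$ containing it, so $e$ still forms a type-(a) subpath of $\delta R'$ with $u$ as one of its endpoints, whence $u \in C(R')$. The delicate part of the argument is the bookkeeping here: identifying the particular subpath of $\delta R$ that certifies $u$'s corner status via Lemma~\ref{Lem:deltaR}, and verifying that neither degree-two reductions nor further cuts can detach $u$ from its incident non-tree boundary edge.
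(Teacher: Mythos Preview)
Your proof is correct but takes a more elaborate route than the paper's. The paper dispatches the lemma in three sentences: if $R\in\mathcal R_u$, then $u$ is a non-corner boundary vertex of $R$ and so must have degree three in $G$ (else it would have been merged away), which forces $u$ to be incident to a unique interior $G$-edge $e$ of $R$; since $G$ has degree three, $e$ lies in only one child of $R$, so at most one child of $R$ belongs to $\mathcal R_u$. That is the whole proof---the extension from ``at most one child'' to ``at most one descendant subtree'' and the contiguity of $\mathcal R_u$ along the resulting path are left implicit, though both follow at once from the same interior-edge observation.

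Your argument factors the work differently: you isolate boundary propagation and corner propagation as reusable facts, then prove incomparability (a) by an $\nca$ case-split and contiguity (b) separately. The engine is ultimately the same---your degree-three/third-edge-to-one-side step in case (a) is exactly the paper's key observation---but you additionally develop corner propagation, which the paper never states or needs. What your approach buys is explicitness: you actually verify contiguity and the single-path property rather than leaving them to the reader. What the paper's approach buys is economy: by phrasing everything in terms of the unique interior $G$-edge, it avoids the corner-propagation machinery and the Lemma~\ref{Lem:deltaR} bookkeeping you flag as ``delicate'' in your last paragraph.
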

\begin{proof}
Let $R$ be a region in $\mathcal R_u$. Non-corner vertices of shortest paths in $\mathcal P_R$ have degree three in $G$ (otherwise, their incident edges would have been merged into one in the construction of $R$) so $u$ must be incident to an interior edge $e$ of $R$. This edge cannot be a pseudo-edge since $u\notin C(R)$ but must be an edge of $G$. The same cannot be true for both child regions of $R$ in $\mathcal T$ since $G$ has degree three so at most one of these regions belongs to $\mathcal R_u$.
\end{proof}
We sometimes regard $\mathcal R_u$ as the subpath from the lemma.

For any shortcut $R_1\rightarrow R_2$, define $\delta(R_1,R_2)$ as the set of $O(1)$ vertices $u\in\delta R_1$ such that $u$ is the last vertex from $s$ on a path of $\mathcal P_{R_1}$ satisfying $u\in\delta R_2$; see Figure~\ref{fig:PhaseIData}.

For each shortcut $R_1\rightarrow R_2$ and each $w\in\delta(R_1,R_2)\cup(\delta R_1\setminus\delta R_2)$, we construct and store a size $O(1/\epsilon_1)$ $(w,R_2,1+\epsilon_1)$-portal set $P(w,R_1\rightarrow R_2)$ of $\delta R_2$ together with distances $d_{R_2}(w,p)$ for each $p\in P(w,R_1\rightarrow R_2)$ (Figure~\ref{fig:PhaseIData}); $\epsilon_1 = \Theta(\epsilon/\log\log n)$ will be specified precisely in Section~\ref{subsec:TimeStretch} below. This completes the description of the preprocessing.
\begin{figure}
\centerline{\scalebox{0.85}{\input{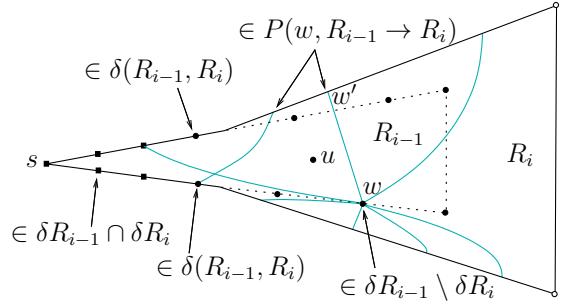}}}
\caption{Regions, $R_{i-1}$ and $R_i$ for shortcut $R_{i-1}\rightarrow R_i$. To simplify the figure, each region has only one hole and it is embedded on the infinite face. For every $w\in\delta(R_{i-1},R_i)\cup(\delta R_{i-1}\setminus\delta R_i)$ (circle vertices), a $(w,R_i,1+\epsilon_1)$-portal set $P(w,R_{i-1}\rightarrow R_i)$ is stored during preprocessing. In the $i$th iteration of the query algorithm, $H_i$ contains edges from $u$ to $P_{i-1}\subseteq V(\delta R_{i-1})$ and edges $(w,w')$ (one shown in figure) with $w'\in P(w,R_{i-1}\rightarrow R_i)\}$ for $w\in P_{i-1}$.}
\label{fig:PhaseIData}
\end{figure}
\begin{lemma}\label{Lem:TypeIPortalSet}
The total space required for Phase I is $O(n\log\log n/\epsilon_1)$.
\end{lemma}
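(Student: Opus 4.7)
The Phase~I preprocessing stores $O(1/\epsilon_1)$ portals plus corresponding distances for every pair $(R_1\!\to\!R_2,w)$ with $w\in\delta(R_1,R_2)\cup(\delta R_1\setminus\delta R_2)$, so it suffices to show that the number of such pairs is $O(n\log\log n)$. Each region $R$ at depth $d$ emits $i+1=O(\log\log n)$ shortcuts (where $i$ is the largest integer with $2^i\mid d$, using $d=O(\log n)$), and there are $O(n)$ regions by Section~\ref{subsec:ConstructRecDecomp}, so the total number of shortcuts is $O(n\log\log n)$. Since $|\delta(R_1,R_2)|=O(1)$ per shortcut, the contribution from the $\delta(R_1,R_2)$ part alone is $O(n\log\log n)$.

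For the $\delta R_1\setminus\delta R_2$ part I exchange summation and count, for each vertex $w$, the number of shortcuts with $w\in\delta R_1\setminus\delta R_2$, splitting on whether $w$ is a corner of $R_1$. When $w\notin C(R_1)$, Lemma~\ref{Lem:RegionPath} forces $R_1$ to lie on the subpath $\mathcal R_w$, which has a shallowest region $R_w^{\mathrm{top}}$ at some depth $d_{\mathrm{top}}$; every region on $\mathcal R_w$ carries $w$ on its boundary, so the condition $w\notin\delta R_2$ forces $R_2$ to be a strict ancestor of $R_w^{\mathrm{top}}$. A length-$2^j$ shortcut from $R_1$ at depth $d_1$ exists only if $2^j\mid d_1$, and lands above $R_w^{\mathrm{top}}$ only if $2^j>d_1-d_{\mathrm{top}}$, so $d_1$ must be a multiple of $2^j$ inside an interval of length at most $2^j$; hence at most one such $d_1$ exists, i.e., at most one shortcut per dyadic length $2^j$ contributes. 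Since $j=O(\log\log n)$, this gives $O(\log\log n)$ shortcuts per non-corner vertex and $O(n\log\log n)$ in total.

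When $w\in C(R_1)$ I simply charge each pair to the corner-region incidence $(w,R_1)$; since $|C(R)|=O(1)$ we have $\sum_R|C(R)|=O(n)$, and each $R_1$ emits $O(\log\log n)$ shortcuts, so this case also contributes $O(n\log\log n)$. Adding the three contributions and multiplying by the $O(1/\epsilon_1)$ storage per pair gives the claimed $O(n\log\log n/\epsilon_1)$ bound. The main technical point is the non-corner estimate: a naive bound using only $|\mathcal R_w|=O(\log n)$ and $O(\log\log n)$ outgoing shortcuts per region would give $O(\log n\cdot\log\log n)$ per vertex; the saving comes from combining the subpath structure of $\mathcal R_w$ (Lemma~\ref{Lem:RegionPath}) with the divisibility condition $2^j\mid d_1$ built into the shortcut system to show that only one shortcut per dyadic length can matter.
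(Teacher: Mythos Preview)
Your proposal is correct and follows essentially the same approach as the paper: split the pairs $(R_1\!\to\!R_2,w)$ according to whether $w$ is a corner of $R_1$, handle corners and $\delta(R_1,R_2)$ via the $O(n\log\log n)$ total shortcut count, and for non-corner $w$ use Lemma~\ref{Lem:RegionPath} to argue that only $O(\log\log n)$ shortcuts cross the top of the subpath $\mathcal R_w$. The paper's proof is terser---it asserts the $O(\log\log n)$ crossing bound directly from Lemma~\ref{Lem:RegionPath} without spelling out the dyadic divisibility argument you give---so your version makes explicit the step the paper leaves to the reader.
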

\begin{proof}
It suffices to give an $O(n\log\log n/\epsilon_1)$ bound on the total size of portal sets defined above. Let $w\in V$ be given. By Lemma~\ref{Lem:RegionPath}, there can only be $O(\log\log n)$ shortcuts $R_1\rightarrow R_2$ where $R_1\in\mathcal R_w$ and $R_2\notin\mathcal R_w$. Hence there are only $O(\log\log n)$ shortcuts $R_1\rightarrow R_2$ where $w\in\delta R_1\setminus(\delta R_2\cup C(R_1))$. The total number of sets $\delta(R_1,R_2)$ and the total number of corners of $R_1$ over all shortcuts $R_1\rightarrow R_2$ is $O(n\log\log n)$ and $|\delta(R_1,R_2)| = O(1)$. As each set $P(w,R_1\rightarrow R_2)$ has size $O(1/\epsilon_1)$, total size of portal sets is $O(n\log\log n/\epsilon_1)$.
\end{proof}

\subsection{Query}\label{subsec:Query}
In this subsection, we present Phase I for query vertices $u$ and $v$. Figure~\ref{fig:PhaseIData} is useful to consult in the following. We assume that starting regions $R_1(u)$ and $R_1(v)$ are not on the same leaf-to-root path in $\mathcal T$. As we will see later, the other case is easily handled. 
Pseudocode for vertex $u$ can be found in Figure~\ref{fig:PhaseI}; the same call is made with $u$ replaced by $v$.
\begin{figure}[!ht]
\begin{tabbing}
\rule{\linewidth}{\arrayrulewidth}\\
d\=dd\=\quad\=\quad\=\quad\=\quad\=\quad\=\quad\=\quad\=\quad\=\quad\=\quad\=\quad\=\kill
\>\textbf{Phase I} for $u$:\\\\
\>1. \>\>let $R_1\rightarrow R_2\rightarrow\cdots\rightarrow R_k$ be the shortcuts from $R_1 = R_1(u)$ to $R_k = C_u$\\
\>2. \>\>let $P_2 = P(u,R_1\rightarrow R_2)$\\
\>3. \>\>for each $p\in P_2$, let $\tilde{d}_2(u,p) = d_{R_2}(u,p)$\\
\>4. \>\>for $i = 3$ to $k$\\
\>5. \>\>\>let $E_i = \{(u,w,\tilde{d}_{i-1}(u,w)) | w\in P_{i-1}\}$\\
\>6. \>\>\>let $E_i' = \{(w,w',d_{R_i}(w,w')) | w\in \delta(R_{i-1},R_i)\cup (P_{i-1}\setminus\delta R_i), w'\in P(w,R_{i-1}\rightarrow R_i)\}$\\
\>7. \>\>\>let $C_i$ be face $\delta R_i$ restricted to vertices that are either in $\delta(R_{i-1},R_i)\cup C(R_i)$ or\\
\>   \>\>\>are incident to edges in $E_i\cup E_i'$\\
\>8. \>\>\>construct the graph $H_i$ consisting of the edges $E_i\cup E_i'\cup E(C_i)$\\
\>9. \>\>\>for each $p\in V(H_i)$, let $\tilde{d}_i(u,p) = d_{H_i}(u,p)$\\
\>10.\>\>\>if $i < k$, construct $(u,H_i,1+\epsilon_1)$-portal set $P_i\subseteq V(H_i)$ of $\delta R_i$ of size $O(1/\epsilon_1)$\\
\>11.\>\>construct $(u,H_k,1+\epsilon_2)$-portal set $P_u\subseteq V(H_k)$ of $\delta R_k$ of size $O(1/\epsilon_2)$\\
\>12.\>\>output $(P_u,\{\tilde{d}(u,p)|p\in P_u\})$, where $\tilde{d}(u,p) = \tilde{d}_k(u,p)$\\
\rule{\linewidth}{\arrayrulewidth}
\end{tabbing}
\caption{Pseudocode for Phase I applied to $u$. Region $C_u$ is defined as in Lemma~\ref{Lem:PhaseI}.}\label{fig:PhaseI}
\end{figure}

Let $C_u$ be defined as in Lemma~\ref{Lem:PhaseI} ($C_v$ is defined similarly for $v$). Let $R_1\rightarrow R_2\rightarrow\cdots\rightarrow R_k$ denote the sequence of shortcuts from $R_1 = R_u$ to $R_k = C_u$. To simplify the code, we assume $k\geq 3$; the other case is straightforward. Note that $k = O(\log\log n)$.

In lines $2$ and $3$, we obtain the precomputed portal set $P_2 = P(u,R_1\rightarrow R_2)$ as well as distances $d_{R_2}(u,p)$ for each portal $p\in P_2$. Note that $P_2$ is well-defined by definition of $R_1(u)$.

In the $i$th iteration of the for-loop, we are given $P_{i-1}$ constituting portals for $\delta R_{i-1}$ and we form a graph $H_i$ containing $u$ and a subset of $V(\delta R_i)$ such that all distances from $u$ to $\delta R_i$ in $R_i$ can be approximated by going through $H_i$ and then along $\delta R_i$. An illustration of $H_i$ can be seen in Figure~\ref{fig:HiPhaseIData}.
\begin{figure}
\centerline{\scalebox{0.85}{\input{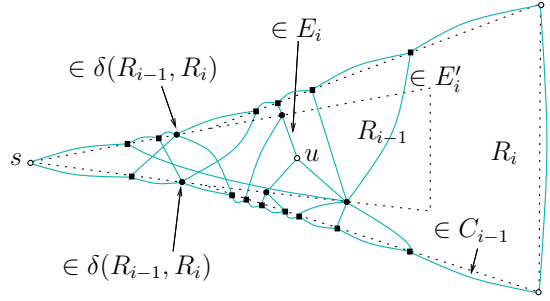}}}
\caption{Illustration of graph $H_i$ from Figure~\ref{fig:PhaseI}. Edges of $H_i$ are solid and boundaries $\delta R_{i-1}$ and $\delta R_i$ are dotted. Black circles are starting points of edges of $E_i'$, squares are their endpoints, and all other vertices of $H_i$ are white circles.}
\label{fig:HiPhaseIData}
\end{figure}

Edges of $E_i$ in line $5$ are added to $H_i$ and these represent the approximate distances to $P_{i-1}$ in $R_{i-1}$ found in the previous iteration. To have $H_i$ approximate distances from $u$ in $R_i$, we add another edge set $E_i'$, defined in line $6$; for every vertex in $w\in \delta(R_{i-1},R_i)\cup(P_{i-1}\setminus\delta R_i)$, we add to $H_i$ the edge $(w,w',d_{R_i}(w,w'))$ for every $w'\in P(w,R_{i-1}\rightarrow R_i)$, representing a shortest $ww'$-path in $R_i$. To allow $H_i$ to traverse $\delta R_i$, we add $C_i$ in line $7$. This is a compact representation of face $\delta R_i$ restricted to a subset of its vertices; each subpath of $\delta R_i$ between two consecutive vertices in this subset is a single edge of the same weight in $C_i$.

We show below that distances in $H_i$ from $u$ to $V(H_i)\cap\delta R_i$ approximate distances from $u$ to $\delta R_i$ in $R_i$. In order to avoid an explosion in the size of future portal sets, we form a $(u,H_i,1+\epsilon_1)$-portal set $P_i\subseteq V(H_i)$ of $\delta R_i$ of size only $O(1/\epsilon_1)$ in line $10$ which is then used in the next iteration.

Line $11$ is identical to line $10$ except that we use a value $\epsilon_2 > 0$ instead of $\epsilon_1$. We shall pick $\epsilon_2 \gg \epsilon_1$ which gives a much smaller portal set $P_u$ output in line $12$; this will help speed up Phase II. We do not use $\epsilon_2$ inside the for-loop in line $10$ since the approximation error builds up over each iteration so we need the smaller value $\epsilon_1$ there.

Lemma~\ref{Lem:PhaseI} follows from the following invariant for the for-loop in lines $4$--$10$:
\begin{description}
\item[Invariant:] At the start of the $i$th iteration of the for-loop in lines $4$--$10$ of Figure~\ref{fig:PhaseI}, for all $w\in V(\delta R_{i-1})$, there is a $p\in P_{i-1}$ such that $d_{R_{i-1}}(u,w)\leq \tilde{d}_{i-1}(u,p) + d_{\delta R_{i-1}}(p,w)\leq (1+\epsilon_1)^{2(i-1)}d_{R_{i-1}}(u,w)$, and $|P_{i-1}| = O(1/\epsilon_1)$.
\end{description}

Note that the invariant holds initially when $i = 3$ since $P_2$ is a $(u,R_2,1+\epsilon_1)$-portal set of $\delta R_2$ and its size is $O(1/\epsilon_1)$.

\paragraph{Maintenance of invariant:} Let $3\leq i < k$ be given and assume the invariant holds at the beginning of the $i$th iteration of the for-loop. We show that it also holds at the beginning of the $(i+1)$th iteration. Let $Q$ be a shortest path in $R_i$ from $u$ to a vertex $w_i\in\delta R_i$. We show that there is a $p_i\in V(H_i)\cap\delta R_i$ such that $d_{H_i}(u,p_i) + d_{\delta R_i}(p_i,w_i)$ approximates the weight of $Q$ up to a factor of $(1+\epsilon_1)^{2i-1}$. The second inequality of the invariant will then follow from Lemma~\ref{Lem:Portal}. The first inequality follows since $\tilde d_i(u,p)$ is the weight of an actual path in $R_i$ from $u$ to $p$ for each $p\in P_i$.

Let $w_{i-1}$ be the last vertex on $Q$ such that the subpath of $Q$ from $u$ to $w_{i-1}$ is contained in $R_{i-1}$. In particular, $w_{i-1}\in\delta R_{i-1}$. By the invariant at the beginning of the $i$th iteration, there is a portal $p_{i-1}\in P_{i-1}$ such that $\tilde{d}_{i-1}(u,p_{i-1}) + d_{\delta R_{i-1}}(p_{i-1},w_{i-1}) \leq (1+\epsilon_1)^{2(i-1)}d_{R_{i-1}}(u,w_{i-1})$.

Assume first that $w_{i-1}\in (\delta R_{i-1}\cap\delta R_i)\setminus\delta(R_{i-1},R_i)$. Then $Q$ is contained in $R_{i-1}$ and $H_i$ approximates the distance from $u$ to $w_i$ up to a factor of $(1+\epsilon_1)^{2(i-1)}$ with the path starting with $(u,p_{i-1})\in E_i$ and followed by edges of $C_i$.

Now assume that $w_{i-1}\in (\delta R_{i-1}\setminus\delta R_i)\cup\delta(R_{i-1},R_i)$. Consider first the case where
$p_{i-1}\in\delta R_{i-1}\setminus\delta R_i$ (Figure~\ref{fig:PhaseIData} with $w$ playing the role of $p_{i-1}$). We have the precomputed portal set $P(p_{i-1},R_{i-1}\rightarrow R_i)$ and it contains a portal $p_i$ such that $d_{R_i}(p_{i-1},p_i) + d_{\delta R_i}(p_i,w_i) \leq (1+\epsilon_1)d_{R_i}(p_{i-1},w_i)$. Hence,
\begin{align*}
  d_{H_i}(u,p_i) + d_{\delta R_i}(p_i,w_i) & \leq \tilde{d}_{i-1}(u,p_{i-1}) + d_{R_i}(p_{i-1},p_i) + d_{\delta R_i}(p_i,w_i)\\
  & \leq (1+\epsilon_1)(\tilde{d}_{i-1}(u,p_{i-1}) + d_{R_i}(p_{i-1},w_i))\\
  & \leq (1+\epsilon_1)(\tilde{d}_{i-1}(u,p_{i-1}) + d_{\delta R_{i-1}}(p_{i-1},w_{i-1}) + d_{R_i}(w_{i-1},w_i))\\
  & \leq (1+\epsilon_1)((1+\epsilon_1)^{2(i-1)}d_{R_{i-1}}(u,w_{i-1}) + d_{R_i}(w_{i-1},w_i))\\
  & \leq (1+\epsilon_1)^{2i-1}d_{R_i}(u,w_i).
\end{align*}

Now consider the other case where $w_{i-1}\in (\delta R_{i-1}\setminus\delta R_i)\cup\delta(R_{i-1},R_i)$ and $p_{i-1}\in\delta R_{i-1}\cap\delta R_i$. Then a shortest path from $p_{i-1}$ to $w_{i-1}$ in $\delta R_{i-1}$ contains a vertex $p_{i-1}'\in\delta(R_{i-1},R_i)$ and the subpath from $p_{i-1}$ to $p_{i-1}'$ is contained in $C_i\subseteq H_i$. We have a precomputed portal set $P(p_{i-1}',R_{i-1}\rightarrow R_i)$ containing a portal $p_i$ such that $d_{R_i}(p_{i-1}',p_i) + d_{\delta R_i}(p_i,w_i) \leq (1+\epsilon_1)d_{R_i}(p_{i-1}',w_i)$. This gives
\begin{align*}
  d_{H_i}(u,p_i) + d_{\delta R_i}(p_i,w_i) & \leq d_{R_{i-1}}(u,p_{i-1}) + d_{\delta R_{i-1}}(p_{i-1},p_{i-1}') + d_{R_i}(p_{i-1}',p_i) + d_{\delta R_i}(p_i,w_i)\\
  & \leq (1+\epsilon_1)(d_{R_{i-1}}(u,p_{i-1}) + d_{\delta R_{i-1}}(p_{i-1},p_{i-1}') + d_{R_i}(p_{i-1}',w_i))\\
  & \leq (1+\epsilon_1)(d_{R_{i-1}}(u,p_{i-1}) + d_{\delta R_{i-1}}(p_{i-1},p_{i-1}') +{}\\
  & \phantom{{}\leq  (1+\epsilon_1)(}d_{\delta R_{i-1}}(p_{i-1}',w_{i-1}) + d_{R_i}(w_{i-1},w_i))\\
  & =    (1+\epsilon_1)(d_{R_{i-1}}(u,p_{i-1}) + d_{\delta R_{i-1}}(p_{i-1},w_{i-1}) + d_{R_i}(w_{i-1},w_i))\\
  & \leq (1+\epsilon_1)((1+\epsilon_1)^{2(i-1)}d_{R_{i-1}}(u,w_{i-1}) + d_{R_i}(w_{i-1},w_i))\\
  & \leq (1+\epsilon_1)^{2i-1}d_{R_i}(u,w_i),
\end{align*}
as desired.

\paragraph{Termination:} The invariant shows that at the beginning of the $k$th iteration, for all $w\in V(\delta R_{k-1})$, there is a $p\in P_{k-1}$ such that $d_{R_{k-1}}(u,w)\leq \tilde{d}_{k-1}(u,p) + d_{\delta R_{k-1}}(p,w)\leq (1+\epsilon_1)^{2(k-1)}d_{R_{k-1}}(u,w)$. Line $11$ is identical to line $10$ for $i = k$ except that $\epsilon_1$ is replaced by $\epsilon_2$ so line $11$ increases the approximation by a factor of $(1+\epsilon_1)(1+\epsilon_2)$. Below we choose $\epsilon_1$ and $\epsilon_2$ such that $(1+\epsilon_1)^{2k-1}(1+\epsilon_2)\leq 1 + \epsilon/2$. This will imply Lemma~\ref{Lem:PhaseI}.


\subsection{Bounding query time and stretch}\label{subsec:TimeStretch}
Obtaining shortcuts in line $1$ can be done in $O(k) = O(\log\log n)$ time. Lines $2$ and $3$ take $O(|P_2|) = O(1/\epsilon_1)$ time as $P(u,R_1\rightarrow R_2)$ and distances $d_{R_2}(u,p)$ for $p\in P_2$ are precomputed.

We first show how a single iteration $i$ of the for-loop in lines $4$--$10$ can be implemented to run in $O((\log(1/\epsilon_1))/\epsilon_1^2)$ time. Then we show how to improve it to $O(1/\epsilon_1^2)$. Finding $E_i$ takes $O(|P_{i-1}|) = O(1/\epsilon_1)$ time. As $|\delta(R_{i-1},R_i)| = O(1)$, $E_i'$ can be found in time $O((1 + |P_{i-1}|)/\epsilon_1) = O(1/\epsilon_1^2)$. Face $C_i$ is obtained in $O((|\delta(R_{i-1},R_i)\cup C(R_i)| + |E_i\cup E_i'|)\log(|\delta(R_{i-1},R_i)\cup C(R_i)| + |E_i\cup E_i'|)) = O(\log(1/\epsilon_1)/\epsilon_1^2)$ time by sorting the vertices according to their cyclic ordering in an Euler tour of face $\delta R_i$. Graph $H_i$ and distances $\tilde{d}_i(u,p)$ for $p\in V(H_i)$ can then be obtained in $O(|E_i| + |E_i'| + |C_{i-1}|) = O(1/\epsilon_1^2)$ time. To obtain $P_i$, apply the portal construction algorithm in the proof of Lemma~\ref{Lem:Portal} to each shortest path in $\mathcal P_i$ restricted to $H_i$. This takes $O(|\mathcal P_i||V(H_i)|) = O(|V(H_i)|) = O(1/\epsilon_1^2)$ time. Line $11$ takes $O(1/\epsilon_2)$ time.

We improve the time bound to $O(1/\epsilon_1^2)$ by avoiding the sorting step above. Instead, we omit adding edges of $C_i$ to $H_i$ and apply a variant of the portal construction algorithm in the proof of Lemma~\ref{Lem:Portal} to each shortest path $Q_i\in\mathcal P_i$ restricted to $H_i$. We describe this variant in the following.

First observe that $V(H_i)\cap Q_i$ can be partitioned into $O(1/\epsilon_1)$ subsets $A_i(1),\ldots,A_i(\ell)$ where each subset $A_i(j)$ is either a singleton set consisting of a vertex in $\delta(R_{i-1},R_i)\cup C(R_i)$, a singleton set consisting of an endpoint of an edge in $E_i$, or a set $P(w,R_{i-1}\rightarrow R_i)\cap Q_i$ for each $w\in\delta(R_{i-1},R_i)\cup (P_{i-1}\setminus\delta R_i)$. Assume that each set $A_i(j)$ is sorted along $Q_i$ in non-decreasing distance from the source $s$ in shortest path tree $T$; this sorting can be done in the preprocessing step.

The first portal $p_0$ to be added to $P_i$ is the vertex $p\in V(H_i)\cap Q_i$ minimizing $d_{H_i}(u,p)$. We can identify $p_0$ in $O(|V(H_i)\cap Q_i|) = O(1/\epsilon_1^2)$ time. Let $t$ be the vertex of $Q_i$ farthest from $s$. As in the proof of Lemma~\ref{Lem:Portal}, we only describe the algorithm for adding portals to $Q_i[p_0,t]$; adding portals along $Q_i[s,p_0]$ is symmetric.

For each set $A_i(j)$, we keep a pointer to the first vertex in its sorted order. We then make a single pass over the sets $A_i(j)$ and for each of them move its pointer forward to the first vertex in $Q_i[p_0,t]$ (if any) for which the distance to it cannot be approximated by going through the previously added portal $p_0$, i.e., the first vertex $w\in Q_i[p_0,t]$ such that $d_{H_i}(u,p_0) + d_{Q_i}(p_0,w) > (1+\epsilon_1)d_{H_i}(u,w)$. Among the vertices with pointers to them, the one closest to $p_0$ in $Q_i$ is then added to $P_i$ as the next portal, and $p_0$ is updated to this vertex. Additional passes are made until the pointers have moved past all vertices of their respective $A_i(j)$ sets.

Correctness follows since the set of portals formed is the same as that obtained by the portal-construction algorithm of Thorup. Running time is $O(1/\epsilon_1^2)$. To see this, note that each pass (except possibly the last) adds at least one portal to $P_i$ so the number of passes is $O(1/\epsilon_1)$. Furthermore, each pass takes $O(\ell + x) = O(1/\epsilon_1 + x)$ time where $x$ is the total number of vertices visited in that pass over all sets $A_i(j)$. Since the total number of vertices visited over all passes is $O(|\cup_{j = 1}^{\ell}A_i(j)|) = O(|V(H_i)\cap Q_i|) = O(1/\epsilon_1^2)$, the time bound follows.

We can obtain a stretch of $1 + \epsilon/2$ for the approximate distances obtained in the final iteration of Phase I as follows. Since $(1+\epsilon_1)^{2k}\leq e^{2k\epsilon_1} < 1/(1 - 2k\epsilon_1)$ when $2k\epsilon_1 < 1$, we pick $\epsilon_1 = \epsilon/(8k)$ to obtain $(1+\epsilon_1)^{2k} < 1/(1 - \epsilon/4)$ which is at most $1 + \epsilon/3$ when $\epsilon\leq 1$. Picking $\epsilon_2 = \epsilon/8$ gives
$(1+\epsilon_1)^{2k-1}(1+\epsilon_2) < 1+\epsilon/2$ for $\epsilon$ smaller than some positive constant. This shows Lemma~\ref{Lem:PhaseI}.

\section{The Second Phase}\label{sec:PhaseII}
Phase II takes as input the sets $P_u$ and $P_v$ with associated approximate distances $\tilde d(u,p)$ and $\tilde d(v,p)$ that were output by Phase I. The output of Phase II has the properties stated in the following lemma. Denote by $S_{uv}$ the shortest path separator in $G$ that separates $R_{uv}$ into $C_u$ and $C_v$. In Section~\ref{sec:ApproxDist}, we efficiently obtain from this output an approximate $uv$-distance.
\begin{lemma}\label{Lem:PhaseII}
Phase II for $u$ can be implemented to run in $O((\log\log n)^2/\epsilon + (\log\log n)/\epsilon^2)$ time using $O(n\log\log n/\epsilon^2)$ space, given the output from Phase I. For the output $(V(H),\{d_H(u,p) | p\in V(H)\}$ from Phase II, we have $|V(H)| = O(\log\log n/\epsilon^2)$ and for any $w\in V(S_{uv})$, there is a vertex $p\in V(H)$ such that $d_G(u,w)\leq d_H(u,p) + d_T(p,w)\leq (1+\epsilon)d_G(u,w)$.
\end{lemma}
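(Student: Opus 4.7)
The plan is to extend Phase I by walking up the recursive decomposition tree $\mathcal{T}$ from $C_u$ to the root through the $O(\log\log n)$ shortcuts of Section~\ref{sec:PhaseI}. Shortest $uw$-paths in $G$ for $w\in V(S_{uv})$ may leave $C_u$ and travel through ancestor regions, so we process these shortcuts in succession, each step relaxing the restriction ``shortest paths must stay in $R_{j-1}$'' to ``must stay in $R_j$''. The preprocessing mirrors Phase I's: for each shortcut $R\to R'$ and each relevant vertex $w\in\delta(R,R')\cup(\delta R\setminus\delta R')$, we precompute small portal sets approximating $R'$-distances from $w$, but now also restricted to the shortest paths of $T$ that can appear as a side of $S_{uv}$ for a descendant query. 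The per-step error is $\epsilon'_1=\Theta(\epsilon/\log\log n)$ so that $(1+\epsilon'_1)^{O(\log\log n)}\leq 1+\epsilon/2$; at the final step we use $\epsilon'_2=\Theta(\epsilon)$. By an argument analogous to Lemma~\ref{Lem:TypeIPortalSet} that invokes Lemma~\ref{Lem:RegionPath}, the total extra space is $O(n\log\log n/\epsilon^2)$.

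At query time, initialize a graph $H$ containing $u$ and the portals $P_u$ of Phase I, with edges $(u,p)$ of weight $\tilde{d}(u,p)$. Iterate through $C_u=R_0\to R_1\to\cdots\to R_m$. At step $j$, add to $H$: edges representing shortest paths in $R_j$ from each current portal to its precomputed portals on $\delta R_j$; edges of the compact representation of $\delta R_j$ (analogous to $C_i$ in Figure~\ref{fig:PhaseI}); and edges from current portals to their precomputed portals on the shortest paths of $T$ that contain $V(S_{uv})$. Recompute $d_H(u,\cdot)$, then apply Lemma~\ref{Lem:Portal} to reduce the portal set on $\delta R_j$ to size $O(1/\epsilon'_1)$ for the next iteration. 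The maintained invariant is that for every $w\in V(S_{uv})$ there exists a portal $p\in V(H)$ with $d_{R_j}(u,w)\leq d_H(u,p)+d_T(p,w)\leq(1+\epsilon'_1)^{2j}d_{R_j}(u,w)$. After $m=O(\log\log n)$ iterations we are in $R_m=G$, and a final Lemma~\ref{Lem:Portal} application with parameter $\epsilon'_2$ produces the output of size $O(\log\log n/\epsilon^2)$ with overall stretch $1+\epsilon$.

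The main obstacle is that $V(S_{uv})$ lies in the interior of $R_j$ for every $j$ with $R_j\supseteq R_{uv}$, so, unlike Phase I, we cannot rely solely on portals lying on $\delta R_j$. We handle this by explicitly maintaining, throughout the iteration, a growing portal set on the shortest paths of $T$ that form $S_{uv}$, extended at each shortcut step using the precomputed tree-path portal structures; the correctness of the invariant is then established by an inductive case analysis on whether the last time the shortest $uw$-path crosses $\delta R_{j-1}$ occurs at a vertex in $\delta(R_{j-1},R_j)$, $\delta R_{j-1}\setminus\delta R_j$, or $\delta R_{j-1}\cap\delta R_j$, in close parallel to the maintenance proof of Phase I. A secondary challenge is meeting the query-time bound $O((\log\log n)^2/\epsilon+(\log\log n)/\epsilon^2)$: per iteration we perform $O(\log\log n/\epsilon)$ work sweeping the $O(\log\log n/\epsilon)$-sized portal set on the $S_{uv}$-paths and $O(1/\epsilon^2)$ work on the boundary portals. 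Both pieces rely on the sorting-free pointer-based portal propagation of Section~\ref{subsec:TimeStretch}, which requires that precomputed portal sets be stored sorted along their shortest-path components during preprocessing.
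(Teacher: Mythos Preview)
Your proposal essentially repeats Phase~I's iterative scheme with a shrunken per-step error $\epsilon_1'=\Theta(\epsilon/\log\log n)$, but this does not match the bounds of Lemma~\ref{Lem:PhaseII} and misses the central idea of the paper's Phase~II. The paper does \emph{not} accumulate error over the $O(\log\log n)$ shortcuts in Phase~II. Instead, all precomputed portal sets in Phase~II (the type~1 and type~2 sets) store \emph{global} distances $d_G(\cdot,\cdot)$ rather than region-restricted distances $d_{R_j}(\cdot,\cdot)$, and use the coarse parameter $\epsilon_2=\Theta(\epsilon)$. Consequently the for-loop never re-sparsifies via Lemma~\ref{Lem:Portal}; it merely filters $P_i=P_{i-1}\cap V(\delta R_i)$, and the final stretch is $(1+\epsilon_2)^2(1+\epsilon/2)$, with only a constant number of $(1+\epsilon_2)$ factors regardless of $k$.

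The key device you are missing is the \emph{dual portal set} $P^*(R_{i-1},R_i)$ together with the vEB predecessor/successor structure. A Phase~I portal $p_1$ may remain in $\delta R_i$ for many levels, so one cannot afford to attach a portal set to $p_1$ at every shortcut. The paper instead observes that for any target $w\in V(S_{uv})$, the precomputed set $P(w)$ on $\delta R_w$ contributes a vertex $p^*$ to $\overline P^*(R_{i-1},R_i)$ at the unique level where $R_w$ is sandwiched; a single predecessor/successor query from $p_1$ locates such a $p^*$, and its type~2 portal set then yields the approximate $d_G(p_1,w)$ in one shot. This is what drives both the $O(n\log\log n/\epsilon_2^2)$ space (each $P(w)$ of size $O(1/\epsilon_2)$ is charged to $O(\log\log n)$ shortcuts, and each dual portal carries a type~2 set of size $O(1/\epsilon_2)$) and the $O((\log\log n)^2/\epsilon+(\log\log n)/\epsilon^2)$ query time (the $(\log\log n)^2/\epsilon$ term is from $|P_u|\cdot k$ vEB queries, the $(\log\log n)/\epsilon^2$ term from $|P_u|\cdot k/\epsilon_2$ edges in line~7).

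Your scheme, by contrast, cannot achieve these bounds: if the boundary portal set is kept at size $O(1/\epsilon_1')=O(\log\log n/\epsilon)$ and each portal has a precomputed set of that same size, a single iteration already costs $O((\log\log n)^2/\epsilon^2)$, and preprocessing space becomes $O(n(\log\log n)^2/\epsilon)$ via the Lemma~\ref{Lem:TypeIPortalSet} argument, not $O(n\log\log n/\epsilon^2)$. Your time accounting (``$O(1/\epsilon^2)$ work on the boundary portals'') tacitly assumes $\epsilon$-sized precomputed sets, which is incompatible with your per-step stretch of $1+\epsilon_1'$. Finally, the suggestion to precompute portals ``restricted to the shortest paths of $T$ that can appear as a side of $S_{uv}$'' is not well-defined at preprocessing time, since $S_{uv}$ is query-dependent; the paper sidesteps this by working with $\delta_G R_{i-1}$, which contains the relevant portion of $S_{uv}$ at the level where each $w$ is handled.
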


\subsection{Preprocessing}
The preprocessing for Phase II consists of the following four steps:

\paragraph{Step $1$:} For each $w\in V$ and each of the at most two regions $R_w$ with $w\in\delta R_w$ and $w\notin\delta R_w'$ where $R_w'$ is the parent of $R_w$ in $\mathcal T$, we form a $(w,G,1+\epsilon_2)$-portal set $P(w)'$ of $\delta_G R_w$. From this we form and store a subset $P(w)$ of $V(\delta R_w)$. This subset contains $P(w)'\cap V(\delta R_w)$. In addition, for every $p\in P(w)'$ and every left or right side $Q\in\mathcal P(R_w)$ (see Section~\ref{subsec:RecDecomp}) such that $Q$ and $p$ are contained in the same path of $T$, $P(w)$ contains the successor and predecessor (if any) of $p$ on $Q$; Figure~\ref{fig:Pw} gives an illustration.
\begin{figure}
\centerline{\scalebox{0.85}{\input{Pw.pstex_t}}}
\caption{Step $1$ of the preprocessing for Phase II. White vertices belong to $\delta R_w$ and black and white vertices belong to $\delta_G R_w$. Edges of $G$ incident to $\delta_G R_w$ and edges of path $Q$ are solid. The predecessor and successor on $Q$ of a vertex of $P(w)'$ belong to $P(w)$.}
\label{fig:Pw}
\end{figure}
Note that $V(Q)\subseteq V(\delta R_w)$ so $P(w)\subseteq V(\delta R_w)$. For any $v\in V(\delta R_w)\subseteq V(\delta_G R_w)$ there is a $p'\in P(w)'$ with $d_G(w,p') + d_{\delta_G R_w}(p',v)\leq (1+\epsilon_2)d_G(w,v)$. Since $v\in V(\delta R_w)$ the shortest path in $\delta_G R_w$ from $p'$ to $v$ intersects $V(\delta R_w)$. Hence there is a $p\in P(w)\cup C(R_w)$ which is either $p'$, the successor or predecessor of $p'$ on $Q$, or a corner in $C(R_w)$ such that $d_G(w,p) + d_{\delta R_w}(p,v)\leq (1+\epsilon_2)d_G(w,v)$. We also have $|P(w)| = O(1/\epsilon_2)$.


\paragraph{Step $2$:} For each shortcut $R_1\rightarrow R_2$ and each $u\in\delta R_1\setminus\delta R_2$, store a $(u,G,1+\epsilon_2)$-portal set $P_1(u,R_1\rightarrow R_2)$ of $\delta_G R_1$ of size $O(1/\epsilon_2)$; Call this a \emph{type 1} portal set.

\paragraph{Step $3$:} For any shortcut $R_1\rightarrow R_2$, \emph{dual portal set} $P^*(R_1,R_2)$ is the set of vertices $p^*\in(\delta R_1\cap\delta R_2)\setminus C(R_1)$ for which a vertex $w$ exists with $R_w\in R_1\leadsto R_2$, $R_w\neq R_2$, such that $p^*\in P(w)$; see Figure~\ref{fig:PhaseIIData}.
Define $\overline P^*(R_1,R_2) = P^*(R_1,R_2)\cup C(R_1)\cup\delta(R_1,R_2)$. Note that $\overline P^*(R_1,R_2)\subseteq V(\delta R_1)$. For each $p^*\in \overline P^*(R_1,R_2)$, store a $(p^*,G,1+\epsilon_2)$-portal set $P_2(p^*,R_1\rightarrow R_2)$ of $\delta_G R_1$ of size $O(1/\epsilon_2)$ together with distances $d_G(p^*,p')$ for each $p'\in V(\delta_G(R_1))$. Refer to it as a \emph{type 2} portal set.

The definition of type 2 portal sets is rather technical so let us give the high-level idea for introducing them; see Figure~\ref{fig:PhaseIIData} in the following. As in Phase I, we jump along shortcuts $R_{i-1}\rightarrow R_i$ in Phase II; a detailed description is given in the next subsection. In Phase II, we need approximate distances in $G$ from certain portals $p$ in $\delta R_{i-1}$ to $\delta_G R_{i-1}$. However, $p$ might also be present in $\delta R_i$. In this case, we cannot afford to associate portal sets with $p$ and shortcut $R_{i-1}\rightarrow R_i$ since $p$ may occur in several regions of $\mathcal T$ (see Lemma~\ref{Lem:RegionPath}). However, vertices $w$ for which $R_w$ is sandwiched in between $R_{i-1}$ and $R_i$ can pay for dual portal set $P^*(R_{i-1},R_i)$ and the associated type 2 portal sets. As we show below, we can obtain an approximate distance from $p$ to any $w'\in\delta_G R_{i-1}$ by first going along $T$ from $p$ to a nearby $p^*\in P^*(R_{i-1},R_i)$, then along a shortest path in $G$ from $p^*$ to a portal $p'\in\delta_G R_{i-1}$ in the type 2 portal set of $p^*$, and finally from $p'$ to $w'$ along $T$.

Using hashing, we can access each type 1 and type 2 portal set in $O(1)$ time from the vertex and the shortcut defining it.

\paragraph{Step $4$:} For each shortcut $R_1\rightarrow R_2$ and for any shortest path $Q\in\mathcal P_{R_1}$, we keep a vEB-tree, allowing us to find the successor/predecessor of any vertex of $Q$ in the subset $V(Q)\cap\overline P^*(R_1,R_2)$ in $O(\log\log n)$ time. With hashing, space required for the vEB-tree is $O(|\overline P^*(R_1,R_2)|)$~\cite{Predecessor,RangeQueries}. As mentioned in~\cite{Predecessor}, both space and query bounds can be made deterministic.
\begin{figure}
\centerline{\scalebox{0.85}{\input{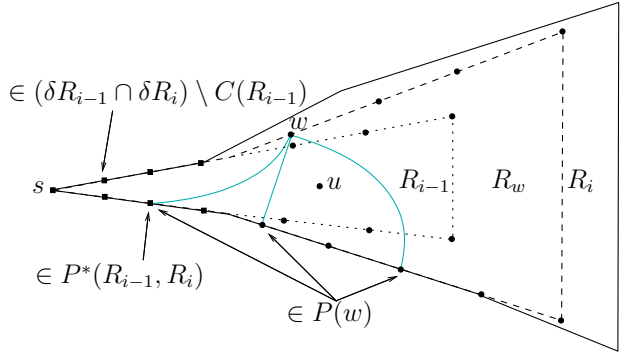}}}
\caption{Region $R_w$ sandwiched in between $R_{i-1}$ and $R_i$ for shortcut $R_{i-1}\rightarrow R_i$. Curves from $w\in\delta R_w$ end in $P(w)$. If a vertex is in $(\delta R_{i-1}\cap\delta R_i)\setminus C(R_{i-1})$ (square vertices) and in $P(w)$ then it is in $P^*(R_{i-1},R_i)$.}
\label{fig:PhaseIIData}
\end{figure}
\begin{lemma}\label{Lem:TypeIIAndIIIPortalSet}
The total space required for Phase II is $O(n\log\log n/\epsilon_2^2)$.
\end{lemma}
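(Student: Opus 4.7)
The plan is to bound the space contribution of each of the four preprocessing steps separately and show that Step~3 dominates. Step~1 contributes $O(n/\epsilon_2)$ since each vertex $w$ has at most two associated regions $R_w$ and $|P(w)|=O(1/\epsilon_2)$. Step~2 contributes $O(n\log\log n/\epsilon_2)$ by the same counting as in the proof of Lemma~\ref{Lem:TypeIPortalSet}: Lemma~\ref{Lem:RegionPath} makes $\mathcal R_u$ a subpath of a leaf-to-root path of $\mathcal T$, so for any fixed $u$ only $O(\log\log n)$ shortcuts have $u\in\delta R_1\setminus\delta R_2$, and each such pair stores a type~1 portal set of size $O(1/\epsilon_2)$. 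Step~4 contributes $O(\sum_{R_1\to R_2}|\overline P^*(R_1,R_2)|)=O(n\log\log n/\epsilon_2)$ by the Step~3 bound below, since $|\mathcal P_{R_1}|=O(1)$ and with hashing a vEB tree needs space proportional to its subset.

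Step~3 is the main challenge. Splitting $\overline P^*(R_1,R_2)=P^*(R_1,R_2)\cup C(R_1)\cup\delta(R_1,R_2)$, the $C(R_1)$ and $\delta(R_1,R_2)$ pieces contain $O(1)$ vertices per shortcut, so over $O(n\log\log n)$ shortcuts they contribute $O(n\log\log n)$ vertex-shortcut pairs, each carrying a type~2 portal set of size $O(1/\epsilon_2)$ together with the associated distances, for a total of $O(n\log\log n/\epsilon_2)$. The core of the proof is therefore to bound $\sum_{R_1\to R_2}|P^*(R_1,R_2)|$. I would do this by charging each $p^*\in P^*(R_1,R_2)$ to a witness pair $(w,p^*)$ with $p^*\in P(w)$ and $R_w\in R_1\leadsto R_2\setminus\{R_2\}$, so that the sum is at most $\sum_{(w,p^*):\,p^*\in P(w)}|\{R_1\to R_2:(w,p^*)\text{ charges it}\}|$.

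The total number of witness pairs across all $w$ is $\sum_w|P(w)|=O(n/\epsilon_2)$. The hard part will be showing that each fixed witness pair $(w,p^*)$ is charged by only $O(\log\log n)$ shortcuts. Since $p^*\in P(w)\subseteq V(\delta R_w)$ and the definition of $P^*$ forces $p^*\notin C(R_1)$, Lemma~\ref{Lem:RegionPath} places $R_1$ on the chain $\mathcal R_{p^*}$, which also contains $R_w$; the crossing condition $R_w\in R_1\leadsto R_2\setminus\{R_2\}$ then requires $d_{R_w}\le d_{R_1}<d_{R_w}+2^j$ for a shortcut of length $2^j$, and the divisibility constraint $2^j\mid d_{R_1}$ pins $d_{R_1}$ to a single value inside this window, while on the chain there is at most one region per depth. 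Since $2^j$ ranges over only $O(\log\log n)$ values, at most $O(\log\log n)$ shortcuts qualify per $(w,p^*)$. Multiplying the $O(n/\epsilon_2)$ witness pairs by this $O(\log\log n)$ factor yields $\sum_{R_1\to R_2}|P^*(R_1,R_2)|=O(n\log\log n/\epsilon_2)$; equipping each such $p^*$ with a type~2 portal set of size $O(1/\epsilon_2)$ and its distances gives the dominant $O(n\log\log n/\epsilon_2^2)$ bound for Step~3, and summing all four steps yields the claimed total.
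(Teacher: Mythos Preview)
Your proof is correct and follows essentially the same approach as the paper's: charge each dual portal to a witness pair $(w,p^*)$, use $\sum_w|P(w)|=O(n/\epsilon_2)$, and bound the number of shortcuts charged per pair by $O(\log\log n)$ via the depth/divisibility constraint on $d_{R_1}$. One minor point: the aside that $\mathcal R_{p^*}$ ``also contains $R_w$'' is unnecessary and may fail (nothing prevents $p^*\in C(R_w)$), but your argument only actually uses $R_1\in\mathcal R_{p^*}$ together with the depth window $d_{R_w}\le d_{R_1}<d_{R_w}+2^j$, which is sound.
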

\begin{proof}
The total size of portal sets $P(u)$ over all $u\in V$ is $O(n/\epsilon_2)$. A proof similar to that of Lemma~\ref{Lem:TypeIPortalSet} shows that the total size of all type $1$ portal sets is $O(n\log\log n/\epsilon_2)$.

To bound the size of type 2 portal sets, consider two shortcuts $R_1\rightarrow R_2$ and $R_1'\rightarrow R_2'$. By Lemma~\ref{Lem:RegionPath}, $P^*(R_1,R_2)$ and $P^*(R_1',R_2')$ are disjoint if $R_1\leadsto R_2$ and $R_1'\leadsto R_2'$ are not contained in the same leaf-to-root path $P$ in $\mathcal T$. For any $w\in V$, there are only $O(\log\log n)$ shortcuts $R_1''\rightarrow R_2''$ where $R_w\in R_1''\leadsto R_2''\subseteq P$. Hence $|P(w)\cap P^*(R_1,R_2)|$ summed over all shortcuts $R_1\rightarrow R_2$ is $O(|P(w)|\log\log n) = O(\log\log n/\epsilon_2)$. Hence, the total size of all dual portal sets $P^*(R_1,R_2)$, and hence also the total size of all vEB trees, is $O(n\log\log n/\epsilon_2)$. Recall that the number of regions is $O(n)$. Each region has $O(1)$ corners and each set $\delta(R_1,R_2)$ has constant size so the total size of all sets $\overline P^*(R_1,R_2)$ is $O(n\log\log n/\epsilon_2)$. Each element of these sets has a type 2 portal set of size $O(1/\epsilon_2)$.
\end{proof}

\subsection{Query}
Let $\mathcal P_u = C_u\leadsto G$ be the path from $C_u$ to the root $G$ of $\mathcal T$. In the following, for any vertex $w$, denote by $R_w$ the region (if any) such that $R_w\in\mathcal P_u$, $w\in\delta R_w$, and $w\notin\delta R_w'$ where $R_w'$ is the parent of $R_w$ in $\mathcal T$. Observe that $V(S_{uv})\subseteq\cup_{R\in\mathcal P_u}V(\delta R)$. Phase II for $u$ takes the output from Phase I and produces output satisfying Lemma~\ref{Lem:PhaseII}.
\begin{figure}[!ht]
\begin{tabbing}
\rule{\linewidth}{\arrayrulewidth}\\
d\=dd\=\quad\=\quad\=\quad\=\quad\=\quad\=\quad\=\quad\=\quad\=\quad\=\quad\=\quad\=\kill
\>\textbf{Phase II} for $u$:\\\\
\>1. \>\>let $R_1\rightarrow R_2\rightarrow\cdots\rightarrow R_k$ be the shortcuts from $R_1 = C_u$ to $R_k = G$\\
\>2. \>\>let $P_1 = P_u$ (portal set output from Phase I)\\
\>3. \>\>let $H$ be the graph initially consisting of edges $(u,p,\tilde{d}(u,p))$ for all $p\in P_1$\\
\>4. \>\>for $i = 2$ to $k$\\
\>5. \>\>\>add to $H$ edges $(p,q,d_G(p,q))$ for all $p\in P_{i-1}\setminus V(\delta R_i)$ and $q\in P_1(p,R_{i-1}\rightarrow R_i)$\\
\>6. \>\>\>for each $p\in P_{i-1}$ and each $p^*\in\{\succ(p,\overline P^*(R_{i-1},R_i)),\pred(p,\overline P^*(R_{i-1},R_i))\}$\\
\>7. \>\>\>\>add to $H$ edge $(p,q,d_T(p,p^*)+d_G(p^*,q))$ for all $q\in P_2(p^*,R_{i-1}\rightarrow R_i)$\\
\>8. \>\>\>let $P_i = P_{i-1}\cap V(\delta R_i)$\\
\>9. \>\>output $(V(H),\{d_H(u,p) | p\in V(H)\})$\\
\rule{\linewidth}{\arrayrulewidth}
\end{tabbing}
\caption{Pseudocode for Phase II applied to $u$. Here, $C_u$ resp.~$P_u$ denotes the final region resp.~portal set reached in Phase I and $S_{uv}$ is the shortest path separator in $G$ that separates $R_{uv}$ into $C_u$ and $C_v$. In line $6$, $\succ(p,\overline P^*(R_{i-1},R_i))$ resp.~$\pred(p,\overline P^*(R_{i-1},R_i))$ refers to the successor resp.~predecessor of $u$ in $\overline P^*(R_{i-1},R_i)$.}\label{fig:PhaseII}
\end{figure}
We give a high-level description of Phase II before going into details.  Pseudocode can be seen in Figure~\ref{fig:PhaseII}. The algorithm traverses shortcuts $R_1\rightarrow R_2\rightarrow\cdots\rightarrow R_k$ from $R_1 = C_u$ to the root $R_k = G$ of $\mathcal T$ and incrementally constructs a graph $H$ which at termination will satisfy Lemma~\ref{Lem:PhaseII}. In line $3$, edges of $H$ represent approximate paths found in Phase I. These paths correspond to subpaths of the final full paths in $G$ (corresponding to the final $H$) and the subpaths are prefixes of these full paths that are contained in $C_u$. Consider the $i$th iteration of the for-loop. In line $5$, we check if any subpath endpoint $p\in P_{i-1}$ disappears as a boundary vertex when jumping from $R_{i-1}$ to $R_i$. If so, we can extend the subpath to full paths $u\leadsto p\leadsto q$ for each $q\in P_1(p,R_{i-1}\rightarrow R_i)$. The other interesting case is when $p\in\delta R_{i-1}\cap\delta R_i$. Then we do not have a type $1$ portal set associated with $p$ and $R_{i-1}\rightarrow R_i$ but it might be that some separator vertices $w$ of $S_{uv}$ that are present in $\delta_G R_{i-1}$ are no longer present in $\delta R_i$ and we need to ensure that there is a good path in $H\cup S_{uv}$ from $p$ to $w$. This case is handled in lines $6$ and $7$ where we ensure such a good path from $p$ to $w$ by using the type $2$ portal sets associated with vertices of $\overline P^*(R_{i-1},R_i))$ that are close to $p$.

To show correctness, i.e., that the set output in line $9$ satisfies Lemma~\ref{Lem:PhaseII}, let $w$ be any vertex on $S_{uv}$ and let $P$ be a shortest path in $G$ from $u$ to $w$. Let $w'$ be the last vertex on $P$ such that $P[u,w']$ is contained in $C_u$. Note that $w'\in V(\delta C_u)$. By Lemma~\ref{Lem:PhaseI}, there is a $p_1\in P_1$ such that $d_{C_u}(u,w')\leq\tilde d(u,p_1) + d_{\delta C_u}(p_1,w')\leq(1+\epsilon/2)d_{C_u}(u,w')$. Since $V(S_{uv})\subseteq\cup_{R\in\mathcal P_u}V(\delta R)$, we have $R_w\in\mathcal P_u$. Since $p_1\in P_1\subseteq V(\delta C_u)$, we also have $R_{p_1}\in\mathcal P_u$. Let $R_{i_w-1}\rightarrow R_{i_w}$ and $R_{i_{p_1}-1}\rightarrow R_{i_{p_1}}$ be the shortcuts such that $R_w\in R_{i_w-1}\leadsto R_{i_w}$, $R_w\neq R_{i_w}$, and $R_{p_1}\in R_{i_{p_1}-1}\leadsto R_{i_{p_1}}$, $R_{p_1}\neq R_{i_{p_1}}$. We consider two cases in the following: $i_{p_1}\leq i_w$ and $i_{p_1} > i_w$.

\paragraph{Case $1$, $i_{p_1}\leq i_w$ (Figure~\ref{fig:Case1}):}
\begin{figure}
\centerline{\scalebox{0.85}{\input{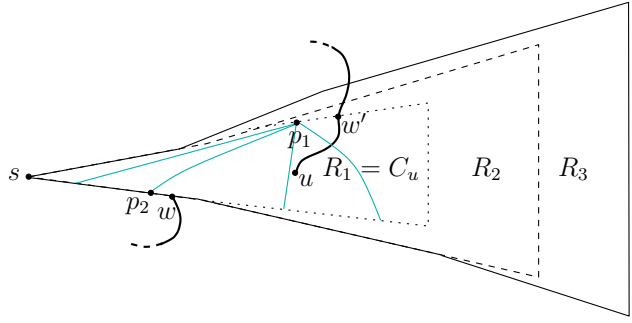}}}
\caption{Case $1$ in the correctness proof for Phase II; here $i_{p_1} = 1$ and $i_w = 3$. The first and last part of a shortest $uv$-path in $G$ through $w'$ is shown. Curves from $p_1$ end in vertices of $P_1(p_1,R_{i-1}\rightarrow R_i)$.}
\label{fig:Case1}
\end{figure}
Consider iteration $i = i_{p_1}$ of the for-loop. Since $p_1\in V(\delta C_u)\cap V(\delta R_{i-1})$, we have $p_1\in P_{i-1}$. Since $R_{p_1}\neq R_{i_{p_1}}$, it follows that $p_1\notin V(\delta R_i)$. Hence, the final $H$ contains an edge $(p_1,p,d_G(p_1,p))$ for each portal $p\in P_1(p_1,R_{i-1}\rightarrow R_i)$ (line $5$). Since $w\in V(S_{uv})\subseteq V(\delta_G C_u)$ and since $i\leq i_w$, we must have $w\in\delta_G R_{i-1}$ so there is a portal $p_2\in P_1(p_1,R_{i-1}\rightarrow R_i)$ satisfying $d_G(p_1,p_2) + d_{\delta_G R_{i-1}}(p_2,w)\leq (1+\epsilon_2)d_G(p_1,w)$. Note that $d_{\delta_G R_{i-1}}(p_2,w) = d_T(p_2,w)$. The path in the final graph $H$ consisting of edges $(u,p_1)$ and $(p_1,p_2)$ followed by the path in $T$ from $p_2$ to $w$ has weight at most
\begin{align*}
\tilde{d}(u,p_1) + d_G(p_1,p_2) + d_T(p_2,w) & \leq (1+\epsilon_2)(\tilde{d}(u,p_1) + d_G(p_1,w))\\
 & \leq (1+\epsilon_2)(\tilde{d}(u,p_1) + d_{\delta C_u}(p_1,w') + d_G(w',w))\\
 & \leq (1+\epsilon_2)((1+\epsilon/2)d_{C_u}(u,w') + d_G(w',w))\\
 & \leq (1+\epsilon_2)(1+\epsilon/2)d_G(u,w).
\end{align*}
\paragraph{Case $2$, $i_{p_1} > i_w$ (Figure~\ref{fig:Case2}):}
\begin{figure}
\centerline{\scalebox{0.85}{\input{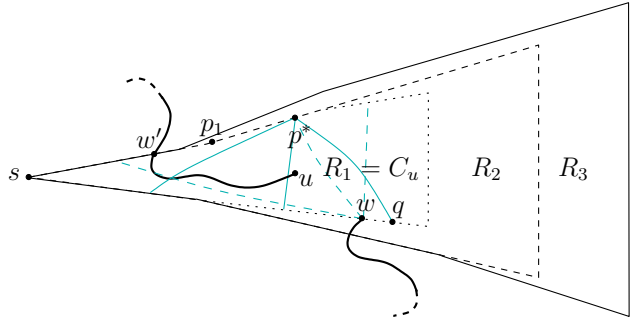}}}
\caption{Case $2$ in the correctness proof for Phase II; here $i_{p_1} = 2$ and $i_w = 1$. The first and last part of a shortest $uv$-path in $G$ through $w'$ is shown. Dashed curves from $w$ end in vertices of $P(w)$ one of which is $q^*\in\overline{P}^*(R_{i-1},R_i)$. Solid curves from $p^*$ end in $P_2(p^*,R_{i-1}\rightarrow R_i)$.}
\label{fig:Case2}
\end{figure}
Consider iteration $i = i_w$ of the for-loop. Since $p_1\in V(\delta R_w)$, there is a portal $p_w^*\in P(w)\cup C(R_w)$ such that $d_G(w,p_w^*) + d_T(p_w^*,p_1)\leq (1+\epsilon_2)d_G(w,p_1)$ (see step $1$ of the preprocessing). If $p_w^*\in\delta R_{i-1}\cap\delta R_i$ then $p_w^*\in P^*(R_{i-1},R_i)\cup C(R_{i-1})$. Otherwise, there is a vertex $p'\in\delta(R_{i-1},R_i)$ such that the path in $T$ from $p_1$ to $p_w^*$ contains $p'$. Hence,
\begin{align*}
d_G(w,p') + d_T(p',p_1) & \leq d_G(w,p_w^*) + d_T(p_w^*,p') + d_T(p',p_1) = d_G(w,p_w^*) + d_T(p_w^*,p_1)\\
 & \leq (1+\epsilon_2)d_G(w,p_1).
\end{align*}

It follows from the above that there is a vertex $q^*\in\overline{P}^*(R_{i-1},R_i)$ such that $d_G(w,q^*) + d_T(q^*,p_1)\leq (1+\epsilon_2)d_G(w,p_1)$. Thus, for one of the two choices of $p^*$ in line $6$, we have
\[
  d_G(w,p^*) + d_T(p^*,p_1) \leq d_G(w,q^*) + d_T(q^*,p_1) \leq (1+\epsilon_2)d_G(w,p_1).
\]
For that choice of $p^*$, let $q$ be the portal in $P_2(p^*,R_{i-1}\rightarrow R_i)$ such that $d_G(p^*,q) + d_T(q,w)\leq (1+\epsilon_2)d_G(p^*,w)$. The path in $H$ consisting of edges $(u,p_1)$ (added in line $3$) and $(p_1,q)$ (added in line $7$) followed by the path in $T$ from $q$ to $w$ has weight at most
\begin{align*}
\tilde{d}(u,p_1) + d_T(p_1,p^*) + d_G(p^*,q) + d_T(q,w) & \leq (1+\epsilon_2)(\tilde{d}(u,p_1) + d_T(p_1,p^*) + d_G(p^*,w))\\
 & \leq (1+\epsilon_2)(\tilde{d}(u,p_1) + (1+\epsilon_2)d_G(p_1,w))\\
 & \leq (1+\epsilon_2)^2(\tilde{d}(u,p_1) + d_{\delta C_u}(p_1,w') + d_G(w',w))\\
 & \leq (1+\epsilon_2)^2((1+\epsilon/2)d_{C_u}(u,w') + d_G(w',w))\\
 & \leq (1+\epsilon_2)^2(1+\epsilon/2)d_G(u,w).
\end{align*}

To show Lemma~\ref{Lem:PhaseII}, note that $(1+\epsilon_2)^2(1+\epsilon/2) = (1+\epsilon/8)^2(1+\epsilon/2)$ is at most $1+\epsilon$ for any $\epsilon$ less than some positive constant. As for query time, lines $1$ and $2$ can be executed in $O(k) = O(\log\log n)$ time. By Lemma~\ref{Lem:PhaseI}, adding edges to $H$ in line $3$ can be done in $O(|P_1|) = O(1/\epsilon)$ time. The total time to find successors and predecessors over all iterations of the for-loop is $O(|P_1|k\log\log n) = O((\log\log n)^2/\epsilon)$. The additional time spent in the for-loop is bounded by the number of edges added to $H$. The total number of edges added in line $7$ is $O(|P_1|k/\epsilon_2) = O(\log\log n/\epsilon^2)$. Note that every $p$ considered in line $5$ is not included in $P_i$ in line $8$. Hence, we add a total of $O(|P_1|/\epsilon_2) = O(1/\epsilon^2)$ edges in line $5$.

\section{Obtaining the approximate distance}\label{sec:ApproxDist}
In this section, we show how to obtain an approximate $uv$-distance within the time and space stated in Theorem~\ref{Thm:Main}, given the output of Phase II.

We execute Phase II for both $u$ and $v$, getting outputs $(V(H_u),\{d_{H_u}(u,p) | p\in V(H_u)\}$ and $(V(H_v),\{d_{H_v}(u,p) | p\in V(H_v)\}$, respectively. To find an approximate $uv$-distance, assume first that $V(H_u)\cup V(H_v)\subseteq V(S_{uv})$. Let $Q$ be one of the two shortest paths from $s$ in $T$ bounding $S_{uv}$ and sort the vertices of $V(H_u)\cap Q$ along $Q$. To do the sorting efficiently, we make a DFS traversal of $T$ during preprocessing and label each vertex with an integer time stamp denoting when it was first visited in the traversal. Now, sorting the vertices of $V(H_u)\cap Q$ along $Q$ corresponds to integer sorting their precomputed labels. With the algorithm of Han and Thorup~\cite{IntegerSorting}, this takes $O(|V(H_u)|\sqrt{\log\log(|V(H_u)|)})$ time. We then remove all $w\in V(H_u)\cap Q$ for which there is another $w'\in V(H_u)\cap Q$ such that $d_{H_u}(u,w') + d_Q(w',w) \leq d_{H_u}(u,w)$. This can be done in $O(|V(H_u)|)$ time with two linear scans over $V(H_u)\cap Q$, one in sorted order and the other in reverse sorted order; the first resp.~second scan removes $w$ if there is a $w'$ before resp.~after $w$ in the order considered such that the inequality holds. Let $V_{u,Q}$ be the resulting subset and form a similar subset $V_{v,Q}$ of $V(H_v)$.

Let $\tilde{d}_Q(u,v)$ be the minimum of $d_{H_u}(u,p_u) + d_Q(p_u,p_v) + d_{H_v}(p_v,v)$ over all pairs $(p_u,p_v)\in V_{u,Q}\times V_{v,Q}$ where $Q[p_u,p_v]$ has no interior vertices belonging to $V_{u,Q}\cup V_{v,Q}$. This takes $O(|V(H_u)|)$ time. Compute a similar value $\tilde{d}_{Q'}(u,v)$ for the other shortest path $Q'$ of $T$ bounding $S_{uv}$. The approximate distance output is $\tilde{d}(u,v) = \min\{\tilde{d}_Q(u,v),\tilde{d}_{Q'}(u,v)\}$.

We need to show $\tilde{d}(u,v)\leq (1+\epsilon)d_G(u,v)$. Pick $w\in V(S_{uv})$ such that $d_G(u,w) + d_G(w,v) = d_G(u,v)$ and assume that $w\in Q$; the case where $w\in Q'$ is symmetric. By Lemma~\ref{Lem:PhaseII}, there are vertices $p_u\in V_{u,Q}$ and $p_v\in V_{v,Q}$ such that $d_{H_u}(u,p_u) + d_Q(p_u,w) \leq (1+\epsilon)d_G(u,w)$ and $d_{H_v}(v,p_v) + d_Q(p_v,w)\leq (1+\epsilon)d_G(v,w)$. Pick a pair $(p_u',p_v')\in V_{u,Q}\times V_{v,Q}$ belonging to $Q[p_u,p_v]$ such that they occur on this path in the order $p_u\leadsto p_u'\leadsto p_v'\leadsto p_v$ and such that $Q[p_u',p_v']$ has no interior vertices belonging to $V_{u,Q}\cup V_{v,Q}$. Then
\begin{align*}
\tilde{d}_Q(u,v) & \leq d_{H_u}(u,p_u') + d_Q(p_u',p_v') + d_{H_v}(p_v',v)\\
                 & \leq d_{H_u}(u,p_u) + d_Q(p_u,p_u') + d_Q(p_u',p_v') + d_Q(p_v',p_v) + d_{H_v}(p_v,v)\\
                 & =    d_{H_u}(u,p_u) + d_Q(p_u,p_v) + d_{H_v}(p_v,v)\\
                 & \leq d_{H_u}(u,p_u) + d_Q(p_u,w) + d_Q(w,p_v) + d_{H_v}(p_v,v)\\
                 & \leq (1+\epsilon)(d_G(u,w) + d_G(w,v))\\
                 & =    (1+\epsilon)d_G(u,v),
\end{align*}
showing the desired. Above, we assumed that $V(H_u)\cup V(V_v)\subseteq V(S_{uv})$. If this is not the case, we modify $V(H_u)$ as follows. Partition $V(H_u)\setminus V(S_{uv})$ into maximal-size groups where in each group $M$, all vertices $w$ have the same nearest neighbor $w'$ in $T$ belonging to $S_{uv}$. Replace vertices of $M$ by $w'$ in $V(H_u)$ and instead of approximate distances $d_{H_u}(u,w)$ for $w\in M$, use instead $d_{H_u}(u,w') := \min_{w\in M\cup(\{w'\}\cap V(H_u))} d_{H_u}(u,w) + d_T(w,w')$ for the approximate distance for $w'$. A similar update is done to $V(H_v)$, ensuring that $V(H_u)\cup V(V_v)\subseteq V(S_{uv})$. It is easy to see that the above analysis still carries through.

We have shown Theorem~\ref{Thm:Main} in the case where $R_u$ and $R_v$ are not on the same leaf-to-root path in $\mathcal T$. If instead, say, $R_v = \nca_{\mathcal T}(R_u,R_v)$ then $v\in S_{uv}$ and Phase I and II for $u$ gives a portal set of $S_{uv}$. Our algorithm above is modified to find the portal $p$ nearest to $v$ on $S_{uv}$ and outputs $d_{H_u}(u,p) + d_T(p,v)$, giving the desired stretch. This shows Theorem~\ref{Thm:Main} in the remaining case where $R_u$ and $R_v$ are on the same leaf-to-root path.

\section{Concluding Remarks}\label{sec:ConclRem}
We gave a $(1+\epsilon)$-approximate distance oracle for undirected $n$-vertex planar graphs and fixed $\epsilon > 0$ with $O(n(\log\log n)^2)$ space and $O((\log\log n)^3)$ query time which improves the previous best query time-space product from $O(n\log n)$ to $O(n(\log\log n)^5)$.

We have not focused on preprocessing time. With a simple implementation, we should get near-quadratic preprocessing time and it is possible that the exact space-efficient oracle in~\cite{ExactOraclePlanarMozesSommer} can speed this up further to $\tilde O(n^{3/2})$ as the number of precomputed distances required by our oracle is only $\tilde O(n)$. With techniques from, e.g.,~\cite{OraclePlanarKlein, OraclePlanarThorup}, we can likely get down to $\tilde O(n)$.

The dependency on $\epsilon$ in the query time-space product is slightly worse; it is roughly $1/\epsilon^4$ ($1/\epsilon^3$ when $1/\epsilon = O(\log\log n)$) compared to $1/\epsilon^2$ in~\cite{OraclePlanarKlein, OraclePlanarThorup} and roughly $1/\epsilon$ in~\cite{CompactOraclesPlanar} (where the latter has a slightly worse dependency on $n$ than~\cite{OraclePlanarKlein, OraclePlanarThorup}). Using mainly Monge properties, we believe it should be possible to replace at least one $1/\epsilon$ factor by $\log(1/\epsilon)$. Getting $o(\log\log n)$ query time and $O(n(\log\log n)^c)$ space for some constant $c$ seems problematic with our techniques due to the $\Theta(\log\log n)$ bottleneck from the use of vEB trees when answering queries.

Extension to planar digraphs seems promising due to similarities between our structure and that for digraphs in~\cite{OraclePlanarThorup}. Extension to minor-free graphs would also be interesting.

\end{document}